\def\figurecaption#1#2{\noindent\hangindent 40pt
                       \hbox to 36pt {\small\sl #1 \hfil}
                       \ignorespaces {\small #2}}
\long\def\@makecaption#1#2{
  \vskip 10pt 
  \settowidth{\@tempdima}{#2}
  \ifdim\@tempdima>0pt
       \setbox\@tempboxa\hbox{#1: #2}
     \else
       \setbox\@tempboxa\hbox{#1 #2}
   \fi
   \ifdim \wd\@tempboxa >\hsize               
       \begin{list}{#1:}{
       \settowidth{\labelwidth}{#1:}
       \setlength{\leftmargin}{\labelwidth}
       \addtolength{\leftmargin}{\labelsep}
        }\item #2 \end{list}\par   
     \else                                    
       \hbox to\hsize{\hfil\box\@tempboxa\hfil}  
   \fi}
\def\Real{\mathbb{R}}
\def\Nat{\mathbb{N}}
\def\probeq{\bm{\sim}}
\DeclareMathOperator{\relu}{ReLU}
\DeclareMathOperator{\Reach}{Reach}
\DeclareMathOperator{\ReLU}{\relu}
\DeclareMathOperator{\Vars}{Vars}
\DeclareMathOperator{\softplus}{Softplus}
\DeclareMathOperator{\Leb}{Leb}
\newcommand{\keyword}[1]{\bm{\mathtt{#1}}}
\pgfplotsset{compat=newest}
\newcommand{\indicating}{\text{ind}}
\newcommand{\nonincreasing}{\text{non-inc}}
\newcommand{\bound}{\text{min}}
\newcommand{\verifbound}{\text{bound}}
\crefname{lstlisting}{listing}{listings}
\Crefname{lstlisting}{Listing}{Listings}
\tikzset{
dotted ends/.style={
draw=none,
postaction={
   draw, decoration={lineto, 
   pre=moveto, pre length=0.05*\pgfmetadecoratedpathlength,
   post=moveto, post length=0.05*\pgfmetadecoratedpathlength, 
   }, decorate},
postaction={
   draw, dotted, decoration={lineto, 
   post=moveto, post length=0.95*\pgfmetadecoratedpathlength, 
   }, decorate},
postaction={
   draw, dotted, decoration={lineto, 
   pre=moveto, pre length=0.95*\pgfmetadecoratedpathlength,
   }, decorate},
}}
\theoremstyle{plain}\newtheorem{lemma}[thm]{Lemma} 
\theoremstyle{plain}\newtheorem{theorem}[thm]{Theorem}
\title{Quantitative Verification with Neural Networks}
\thanks{This work was supported in part by 
Innovate UK under the HICLASS project, 
EPSRC Centre for Doctoral Training in Autonomous Intelligent Machines and Systems (EP/S024050/1) and EPSRC Doctoral Training Partnership,
Google DeepMind Scholarship,
Advanded Research and Invention Agency under the Safeguarded AI programme.
}
\author[A.~Abate]{Alessandro Abate\lmcsorcid{0000-0002-5627-9093}}[a]
\author[A.~Edwards]{Alec Edwards\lmcsorcid{0000-0001-9174-9962}}
\author[M.~Giacobbe]{Mirco Giacobbe\lmcsorcid{0000-0002-1825-0097}}[b]
\author[H.~Punchihewa]{Hashan Punchihewa}
\author[D.~Roy]{Diptarko Roy\lmcsorcid{0009-0003-4306-2076}}[b]
\address{Department of Computer Science, University of Oxford, UK}	
\email{alessandro.abate@cs.ox.ac.uk}  
\address{School of Computer Science, University of Birmingham, UK}	
\email{m.giacobbe@bham.ac.uk, d.s.roy@bham.ac.uk}  
\keywords{Data-driven Verification, Quantitative Verification, Probabilistic Programs, Stochastic Dynamical Models, Counterexample-guided Inductive Synthesis, Neural Networks} 
\begin{document}

\begin{abstract} 
  We present a data-driven approach to the quantitative verification of probabilistic programs and stochastic dynamical models. Our approach leverages neural networks to compute tight and sound bounds for the probability that a stochastic process hits a target condition within finite time. This problem subsumes a variety of quantitative verification questions, 
  from the reachability and safety analysis of discrete-time stochastic dynamical models, 
  to the study of assertion-violation and termination analysis of probabilistic programs. 
  We rely on neural networks to represent supermartingale certificates that yield such probability bounds,  
  which we compute using a counterexample-guided inductive synthesis 
  loop: 
  we train the neural certificate while tightening the probability bound over samples of the state space using stochastic optimisation, and then we formally check the certificate's validity over every possible state using satisfiability modulo theories; if we receive a counterexample, we add it to our set of samples and repeat the loop until validity is confirmed.  
  We demonstrate on a diverse set of benchmarks 
  that, thanks to the expressive power of neural networks, 
  our method yields smaller or comparable probability bounds than existing symbolic methods in all cases, 
  and that our approach succeeds on models that are entirely beyond the reach of such alternative techniques. 
\end{abstract}

\maketitle

\section{Introduction}

Probabilistic programs extend imperative programs with the ability to sample from probability distributions~\cite{semantics,GordonHNR14,Kozen81,McIverM05}, 
which provides an expressive language to describe randomized algorithms, cryptographic protocols, 
and Bayesian inference schemes. 
Discrete-time stochastic dynamical models, 
characterised by stochastic difference equations, 
are a natural framework to describe auto-regressive time series, as well as sequential decision and planning problems in unknown environments. 
A fundamental quantitative verification problem for probabilistic programs and stochastic dynamical models is the quantitative reachability question, 
which amounts to finding the probability with which the system reaches a given target condition within a finite number of steps. 
Reachability is at the core of a variety of other important quantitative verification questions as, 
by selecting appropriate target conditions on the state space, we can express the probability that a probabilistic program 
terminates or that it violates an assertion, as well as the probability that a stochastic dynamical model satisfies an invariant or remains within a set of safe configurations. 

Quantitative reachability verification has been studied extensively using theories and algorithms built upon symbolic reasoning techniques,  
such as quantitative calculi~\cite{DBLP:conf/lpar/McIverM02,DBLP:journals/toplas/MorganMS96}, 
probabilistic model checking algorithms~\cite{DBLP:journals/entcs/KwiatkowskaNP06,DBLP:conf/tacas/ForejtKNPQ11}, 
discrete abstractions of stochastic dynamical models~\cite{DBLP:conf/hybrid/AbateKM11,DBLP:journals/iandc/TkachevMKA17,SA13}, 
and the synthesis of supermartingale-like certificates~\cite{DBLP:conf/cav/ChakarovS13,DBLP:conf/cav/ChatterjeeFG16,DBLP:conf/popl/ChatterjeeNZ17,DBLP:journals/toplas/ChatterjeeFNH18,LSAZ22}. 
Among the latter class, a method to provide a sound upper bound for the reachability probability of a system 
is to synthesise a supermartingale function that maps every reachable state to a non-negative real, 
whose value is never smaller than 1 inside the target condition, 
and such that it never increases in expectation as the system evolves outside of the target.  
This is referred to as a~\textit{non-negative repulsing supermartingale} or \textit{stochastic invariant indicator} in the literature~\cite{DBLP:journals/toplas/TakisakaOUH21,DBLP:conf/cav/ChatterjeeGMZ22,DBLP:conf/popl/ChatterjeeNZ17}, 
and its output over a given state provides an upper bound for the probability that the system reaches the target condition from that state.
Symbolic methods for the synthesis of such certificates assume that the supermartingale function, 
as well as its post-expectation, which depends on the model constraints and distributions, 
are both in linear or polynomial form. This poses syntactic restrictions to their applicability. 

Data-driven and counterexample-guided inductive synthesis (CEGIS) procedures, 
combined with machine learning approaches that leverage neural networks to represent certificates, 
have shown great promise in mitigating the aforementioned limitation~\cite{DBLP:conf/nips/ChangRG19,DBLP:journals/csysl/AbateAGP21,giacobbe2022,DBLP:conf/tacas/PeruffoAA21,DBLP:journals/csysl/MathiesenCL23,DBLP:conf/cav/AbateGR20,DBLP:conf/aaai/LechnerZCH22,DBLP:conf/aaai/ZikelicLHC23,DBLP:conf/nips/ZikelicLVCH23,neuralmc,DBLP:conf/aaai/NadaliM0024,DBLP:conf/aaai/NeustroevGL25}. 
In particular, neural-based CEGIS decouples the task of guessing a certificate from that of checking its validity, 
delegating the guessing task to efficient machine learning algorithms that leverage the expressive power of neural networks, 
while confining symbolic reasoning to the checking part of the task, 
which is computationally easier to solve in isolation than the entire synthesis problem. 
CEGIS has been applied to the synthesis of {\em neural supermartingales} for the almost-sure termination analysis of 
probabilistic programs~\cite{DBLP:conf/cav/AbateGR20}, 
as well as their counterpart for stochastic dynamical models with applications to qualitative queries such as almost-sure 
safety and stability~\cite{DBLP:journals/csysl/MathiesenCL23,DBLP:conf/aaai/LechnerZCH22}.

In this paper, we present theory, methods, and an extensive experimental evaluation to demonstrate the efficacy and flexibility of neural supermartingales to solve {\em quantitative verification} questions for probabilistic programs and stochastic dynamical models, using machine learning combined with satisfiability modulo theories (SMT) technologies.

\begin{description}
    \item[Theory] We use the theory of non-negative repulsing supermartingales \cite[Section 4]{DBLP:journals/toplas/TakisakaOUH21} to leverage neural networks 
    as representations of supermartingale functions for quantitative verification, while exploiting SMT solving to guarantee soundness. 
    
    \item[Methods] We present a CEGIS-based approach to train neural supermartingale functions that minimise an upper bound for the reachability  probability over sample points from the state space, and check their validity over every possible state using SMT solving. 
     We present a program-agnostic approach that relies on state samples in the training phase, and also a novel 
     program-aware approach that embeds model information in the loss function to enhance the effectiveness of stochastic optimisation.
    \item[Experiments] We build a prototype implementation and compare the efficacy of our method with the
    state of the art in synthesis of linear supermartingales using symbolic reasoning~\cite{DBLP:journals/toplas/TakisakaOUH21}. 
    We show that our program-aware approach computes tighter or comparable 
    probability bounds than symbolic reasoning on existing benchmarks, while our program-agnostic approach matches it in over half of the instances. 
    We additionally demonstrate that both our approaches can handle models beyond reach of purely symbolic methods.
\end{description}

\section{Probabilistic Programs and Stochastic Dynamical Models}

\begin{figure}[t]
  \begin{align*}
    v &\in \mbox{Vars} & \text{(variables)}\\
    N &\in \Real & \text{(numerals)}
    \\E &::= v \mid N \mid \texttt{-}E \mid E+E \mid E-E \mid E*E \mid \dots
    &\text{(arithmetic expressions)}
    \\P &::= {\tt Bernoulli}\texttt{(}\,E\,\texttt{)} \mid
    {\tt Gaussian}\texttt{(}\,E\texttt{, }E\,\texttt{)} \mid \dots
    &\text{(probability distributions)}
    \\B &::= {\tt true} \mid {\tt !}B \mid B \,\texttt{\&\&}\, B \mid B\texttt{||}B  \mid E\,\texttt{==}\,E \mid E\,\texttt{<}\,E \mid \dots & \text{(Boolean expressions)}
    \\C &::= \keyword{skip} &\text{(update commands)}
    \\&\quad\mid v\,\texttt{=}\,E&\text{(deterministic assignment)}
    \\&\quad\mid v\,\probeq\,P&\text{(probabilistic assignment)}
    \\&\quad\mid C~{\tt ;}~C &\text{(sequential composition)}
    \\&\quad\mid \keyword{if}~B~\keyword{then}~C~\keyword{else}~C~\keyword{fi}&\text{(conditional composition)}
  \end{align*}
  \caption{Grammar for update commands, Boolean and arithmetic expressions.}
  \label{fig:syntax}
\end{figure}

Probabilistic programs are computer programs whose execution is determined by random variables, 
and stochastic dynamical models describe discrete-time dynamical systems with probabilistic behaviour. 
The earlier enjoy the flexibility of imperative programming constructs and are used to describe randomised algorithms, 
and the latter are expressed as stochastic difference equations and are used to describe probabilistic systems that evolve 
over infinite time. The semantics of both can be described in terms of  stochastic processes and, 
for this reason, verification questions for both  can be solved with similar techniques. 

The syntax of our modeling framework uses imperative constructs from probabilistic programs and defines 
executions over infinite time as dynamical systems. 
Specifically, we consider programs that operate over an ordered set of $n$ real-valued variables, denoted by $\Vars$, and update their values through the repeated execution of a command $C$ whose grammar is described in Figure~\ref{fig:syntax}. 
Under this definition, a state of the 
system is an $n$-dimensional vector $s \in \Real^n$ that assigns a value to each variable symbol. The update command $C$ 
defines an update function $f \colon \Real^n \times [0,1]^m \to \Real^n$, where $m$ is the number of syntactic probabilistic assignment statements occurring in $C$, which maps the current state and $m$ random variables uniformly distributed in $[0, 1]$ into the next state. 
Conceptually, within $f$, each random variable is mapped into its respective distribution by applying the appropriate inverse transformation. 
Altogether, our probabilistic program defines a stochastic process, whose 
behaviour is determined by the following stochastic difference equation: 
\begin{equation}
    s_{t+1} = f(s_t, r_t), \quad r_t \sim \mathbb{U}^m, 
    \label{eq:sys}
\end{equation}  
where $r_t$ is an $m$-dimensional random input sampled at time $t$ 
from the uniform distribution $\mathbb{U}^m$ over the $m$-dimensional hypercube $[0,1]^m$.  
The initial state $s_0$ is either given as a deterministic assignment 
to constant values, or is non-deterministically chosen from a set of initial 
conditions $S_0$ characterized by a Boolean expression.    
This setting can be seen as an assertion about the initial conditions followed by the probabilistic program
$\keyword{while~true~do}~C~\keyword{od}$.
As we show in \cref{sec:reachability}, this allows us to characterise verification 
questions such as termination, non-termination, invariance and 
assertion-violation for while loops with general guards, 
as well as reachability and safety verification questions for dynamical models with general stochastic disturbances.

The semantics of our model is defined as a stochastic process induced by the Markov chain over the probability space of infinite words of random samples.
This is defined by the probability space triple $(\Omega, \mathcal{F}, \mathbb{P})$ \cite{BS96,hll1996}, where 
\begin{itemize}
    \item $\Omega$ is the set of infinite sequences $\left([0, 1]^m\right)^\omega$ of $m$-dimensional tuples of values in $[0,1]$,  
    \item $\mathcal{F}$ is the extension of the Borel $\sigma$-algebra over the unit interval $\mathcal B([0, 1])$ to $\Omega$, 
    \item $\mathbb{P}$ is the extension of the Lebesgue measure on $[0, 1]$ to $\Omega$. 
\end{itemize}

Every initialisation of the system on state $s \in \Real^n$ induces a stochastic 
process $\{ X_t^{s} \}_{t \in \Nat}$ over the state space $\Real^n$. 
Let $\omega = r_0r_1r_2\dots$ be an infinite sequence of random samples in $[0,1]^m$, 
then the stochastic process is defined by the sequence of random variables 
\begin{equation}\label{eqn:stochprocess}
    X_{t + 1}^{s}(\omega) = 
    f(X_{t}^{s}(\omega), r_t), 
    \qquad X_0^{s}(\omega) = s.
\end{equation} 
This defines the natural filtration $\{ \mathcal{F}_t \}_{t \in \Nat}$, 
which is the smallest filtration to which the stochastic process $X^s_t$ is adapted. 
In other words \cite{kallenberg}, this can be seen as another Markov chain with state space $\Real^n$ and transition kernel 
\begin{equation}
    T(s, S') = \Leb\left( \left\{r \in [0, 1]^m \mid f(s,r) \in S'\right\} \right), 
\end{equation} 
where $S'$ is a Borel measurable subset of $\Real^n$ 
and Leb refers to the Lebesgue measure of a measurable subset of $[0, 1]^m$. In other words, kernel $T$ denotes the probability to transition from state $s$ into a set of states $S'$.  The transition kernel also defines 
the {\em post-expectation operator} $\mathbb X$, also known as the next-time operator \cite[Definition 2.16] {DBLP:journals/toplas/TakisakaOUH21}. 
$\mathbb X$ can be applied to an arbitrary Borel-measurable function $h \colon \Real^n \to \Real$ 
 defining the \textit{post-expectation of} $h$, denoted by $\mathbb{X}[h]$ and defined as the following function over states:
\begin{equation}\label{eqn:postex}
    \mathbb{X}[h](s)= \int 
    h(s')~T(s, \mathrm{d}s'). 
\end{equation}
This represents the expected value of $h$ evaluated at the next state, given the current state being $s$.
Computing the symbolic representation of a post-expectation for probabilistic programs and stochastic dynamical models 
is a core problem in probabilistic verification~\cite{gehr2016psi,gehr2020lpsi}. Indeed, our theoretical framework builds upon the post-expectation (cf.\ Eq.~\eqref{eq:spec-dec}), 
and our verification procedure uses a symbolic representation of the post-expectation (cf.~\cref{sec:verification}) as well as our program-aware learning procedure, while our program-agnostic learning procedure approximates it statistically (cf.~\cref{sec:learner}). 

We remark that our model encompasses general probabilistic program loops as well as stochastic dynamical systems 
with general disturbances. For example, a probabilistic loop with guard condition $B$ and body $C$ in the form
\begin{align}
    \keyword{while}~B~\keyword{do}~C~\keyword{od} \label{eq:loop}
\end{align}
can be expressed as a loop with guard $\keyword{true}$ and 
body $\keyword{if}~B~\keyword{then}~C~\keyword{else}~\keyword{skip}~\keyword{fi}$. 
This expresses the fact that, after termination, the program will stay on the terminal state indefinitely. 
Also, discrete-time stochastic difference equations with a nonlinear vector field $g$ and time-invariant input disturbance with an arbitrary distribution $\mathcal{W}$ in the form 
\begin{equation}
    s_{t+1} = g(s_t, w_t), \quad w_t \sim \mathcal{W}, \label{eq:dynsys}
\end{equation} 
comply with our model. It is sufficient to derive $w_t$ with an appropriate 
inverse transformation from the uniform distribution and embed it in the update function $f$ (Eq.~\eqref{eq:sys}), to simulate a function $g$ that accepts random inputs $w_t$ distributed according to distribution $\mathcal{W}$. Note that this is without loss of generality; we do not require $f$ to be continuous and we do not require it to be differentiable, and therefore we can represent any distribution using the inverse CDF method, as is standard in stochastic simulation \cite[Chapter 2, p.36]{asmussen2007stochastic}.
Our model is even more general, as it 
encompasses state-dependent distributions, 
whose parameters depend on the state and may depend on other distributions and thus define joint, multi-variate 
and hierarchically-structured distributions. Notably, our model comprises both continuous and discrete probability distributions and is able to model discrete-time stochastic hybrid systems~\cite{LSAZ22}. Our model also subsumes probabilistic models and programs over finite state spaces, where each state can be represented as a point of the Euclidean space $\Real^n$. Note that this encoding of a finite state space ensures Borel-measurability of the update function, as the $\sigma$-algebra of this finite state space (i.e., its powerset) is a subset of the Borel $\sigma$-algebra.

\section{Quantitative Reachability Verification of Probabilistic Models}\label{sec:reachability}

Quantitative verification treats the question of providing a quantity (probability) with which a system satisfies a property, as opposed to providing a definite (positive or negative) answer. In fact, it is sometimes too conservative or inappropriate to demand a definite outcome to  
a formal verification question. For instance, a system whose behaviour is probabilistic may violate a specification on rare corner cases and yet satisfy it with a probability that is deemed acceptable for the application domain. We address the quantitative verification of probabilistic systems, which is the problem of computing the probability for which a system satisfies a specification~\cite{DBLP:conf/sigsoft/Kwiatkowska07,DBLP:conf/icalp/BreugelW01,DBLP:journals/cacm/BaierHHK10,DBLP:reference/mc/BaierAFK18}. 

We consider the {\em quantitative reachability verification} question, which as we show below, is at the core of a variety of quantitative verification problems for probabilistic programs and stochastic dynamical models. Henceforth, we use $\mathbf{1}_S$ 
to denote the indicator function of set $S$, i.e., 
$\mathbf{1}_S(s) = 1$ if $s \in S$ and $\mathbf{1}_S(s) = 0$ if $s \not\in S$; we also use $\lambda x.M$ to denote the anonymous function that takes an argument $x$ and evaluates the expression $M$ to produce its result.

We now provide a recursive characterisation for the probability that a stochastic process reaches a Borel measurable target set in exactly time $t$ (\cref{lem:standard-lemma}), and in at most time $t$ (\cref{lem:reachstep}). Finally, in \cref{lem:reachfinlimit} we employ these results to show that the probability of reaching a target set in \textit{any} finite time is the limit of the time-bounded reachability probability as the time bound goes to infinity.

\begin{lemma}\label{lem:standard-lemma}
Let the event that a stochastic process $\{ X^s_t\}_{t \in \Nat}$ 
over state space $\Real^n$ 
initialised in state $s \in \Real^n$ 
reaches a target set $A \in \mathcal B(\Real^n)$ 
in exactly time $t \in \Nat$ be 
\begin{equation}\label{eqn:reach1}
    \Reach_t^{s}(A) = \left\{ \omega \in \Omega \mid X_0^{s}(\omega) \not\in A , \ldots, X_{t-1}^{s}(\omega) \not\in A , X_{t}^{s}(\omega) \in A \right\}, 
\end{equation}
with $\Reach_0^{s}(A) = \Omega$ if $s \in A$, and $\Reach_0^{s}(A) = \emptyset$ if $s \notin A$.
Then, $\Reach_t^{s}(A)$ is measurable and its probability measure can be expressed as follows:
\begin{subequations}
        \begin{align}
            \mathbb{P}[\Reach_{{t+1}}^{s}(A)]
            &= 
            \mathbf{1}_{\mathbb{R}^n \setminus A}(s) 
            \cdot 
            \mathbb{X}[\lambda s' \mathpunct{.} 
                \mathbb{P}[\Reach_{{t}}^{s'}(A)]
            ](s), 
            \label{eqn:standard-result-reach}\\
            \mathbb{P}[\Reach_{0}^{s}(A)]
            &= \mathbf{1}_A(s).
            \label{eqn:standard-result-zero}
        \end{align}
    \end{subequations}
\end{lemma}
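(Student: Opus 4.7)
The plan is to prove both the measurability and the recursive probability identity by induction on $t$, with the base case being a direct unpacking of definitions: if $s \in A$ then $\Reach_0^s(A) = \Omega$ has probability $1 = \mathbf{1}_A(s)$, and if $s \notin A$ then $\Reach_0^s(A) = \emptyset$ has probability $0 = \mathbf{1}_A(s)$; both sets trivially lie in $\mathcal{F}$, so \eqref{eqn:standard-result-zero} is immediate. For the inductive step, measurability of $\Reach_{t+1}^s(A)$ follows once I show inductively on $k$ that each $X_k^s \colon \Omega \to \Real^n$ is $\mathcal{F}$-measurable: this is clear from \eqref{eqn:stochprocess} using that the update function $f$ is Borel-measurable (it is built from the arithmetic expressions and inverse-CDF transformations specified by the grammar in Figure~\ref{fig:syntax}), so the finite intersection $\{X_0^s \notin A\} \cap \dots \cap \{X_t^s \notin A\} \cap \{X_{t+1}^s \in A\}$ lies in $\mathcal{F}$.

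For the probability identity \eqref{eqn:standard-result-reach}, I split on whether $s \in A$. If $s \in A$, then $X_0^s(\omega) = s \in A$ for every $\omega$, so the condition $X_0^s \notin A$ in \eqref{eqn:reach1} fails and $\Reach_{t+1}^s(A) = \emptyset$; both sides of \eqref{eqn:standard-result-reach} equal $0$ since the indicator $\mathbf{1}_{\Real^n \setminus A}(s)$ vanishes. The real work is the case $s \notin A$. Here I introduce the shift $\sigma \colon \Omega \to \Omega$, $(r_0, r_1, \dots) \mapsto (r_1, r_2, \dots)$, and verify by a short induction on $k$ using \eqref{eqn:stochprocess} that $X_{k+1}^s(\omega) = X_k^{f(s, r_0(\omega))}(\sigma \omega)$. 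Dropping the now-vacuous condition $X_0^s \notin A$, this lets me rewrite
\[
\Reach_{t+1}^s(A) = \bigl\{\omega \in \Omega : \sigma\omega \in \Reach_t^{f(s, r_0(\omega))}(A)\bigr\}.
\]

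Identifying $\Omega$ with the product $[0,1]^m \times \Omega$ and $\mathbb{P}$ with the corresponding product measure, I apply Fubini--Tonelli to obtain
\[
\mathbb{P}[\Reach_{t+1}^s(A)] = \int_{[0,1]^m} \mathbb{P}[\Reach_t^{f(s, r)}(A)] \, \mathrm{d}r.
\]
Finally, the change of variables $s' = f(s, r)$ pushes the Lebesgue measure on $[0,1]^m$ forward to the transition kernel $T(s, \cdot)$ by definition, so the right-hand side equals $\int \mathbb{P}[\Reach_t^{s'}(A)] \, T(s, \mathrm{d}s')$, which is $\mathbb{X}[\lambda s' \mathpunct{.} \mathbb{P}[\Reach_t^{s'}(A)]](s)$ by \eqref{eqn:postex}. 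Multiplying by $\mathbf{1}_{\Real^n \setminus A}(s) = 1$ in this case recovers \eqref{eqn:standard-result-reach}.

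The main obstacle I anticipate is the measurability bookkeeping required to apply Fubini--Tonelli and to ensure that $s' \mapsto \mathbb{P}[\Reach_t^{s'}(A)]$ is itself Borel-measurable so that the post-expectation is well-defined. I plan to address this by strengthening the induction to establish joint measurability of $(s, \omega) \mapsto X_k^s(\omega)$, which is straightforward since $f$ is jointly Borel-measurable in its two arguments; joint measurability of the indicator of $\Reach_t^s(A)$ in $(s, \omega)$ then follows, and Fubini--Tonelli yields Borel-measurability of the marginal $s \mapsto \mathbb{P}[\Reach_t^s(A)]$, closing the induction.
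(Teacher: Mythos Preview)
Your proof is correct, but it takes a different route from the paper. The paper does not argue by induction or use the shift map and Fubini--Tonelli on the product structure of $\Omega$; instead it cites a standard result from Markov chain theory (Meyn and Tweedie, Theorem~3.4.1) to write $\mathbb{P}[\Reach_t^{s_0}(A)]$ directly as the iterated integral
\[
\mathbf{1}_{\mathbb{R}^n\setminus A}(s_0)\int_{\mathbb{R}^n\setminus A}\cdots\int_{\mathbb{R}^n\setminus A}\int_A \prod_{i=1}^t T(s_{i-1},\mathrm{d}s_i),
\]
and then simply peels off the outermost integral to read off the recursion~\eqref{eqn:standard-result-reach}. Measurability is handled in one line by writing $\Reach_t^s(A)$ as a finite intersection of preimages under the measurable maps $X_i^s$.

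Your approach is more elementary and self-contained: by exploiting the explicit representation $\Omega=([0,1]^m)^\omega$ and the shift identity $X_{k+1}^s(\omega)=X_k^{f(s,r_0(\omega))}(\sigma\omega)$, you reduce the recursion to a single application of Fubini--Tonelli and the pushforward characterisation of $T(s,\cdot)$, without appealing to external Markov-chain machinery. The price is the extra measurability bookkeeping you flag at the end (joint measurability of $(s,\omega)\mapsto X_k^s(\omega)$ so that $s'\mapsto\mathbb{P}[\Reach_t^{s'}(A)]$ is Borel), which the paper sidesteps by citing the textbook result. Both arguments are sound; the paper's is shorter, yours exposes more of the underlying structure.
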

\begin{proof}

For every $i$ we have that the pre-image of
any measurable set $R  \in \mathcal{B}(\mathbb{R}^n)$ under $X_i^{s} : (\Omega, \mathcal{F}) \to (\mathbb{R}^n, \mathcal{B}(\mathbb{R}^n))$, denoted $(X_{i}^s)^{-1}[R]$, is measurable since $X_i^{s}$ is a measurable function \cite[Section 3.4]{meyn_tweedie_glynn_2009}. It follows that $\Reach_t^{s}(A)$ is measurable because it can be expressed as the following finite intersection:
\begin{equation*}
    \Reach_t^{s}(A) = 
    \cap\{ (X_i^{s})^{-1}[\mathbb{R}^n \setminus A] ~\colon~ i = 0, \dots, t-1\}
    \cap (X_t^s)^{-1} [A].
\end{equation*}
It is a standard result (see \cite[Theorem 3.4.1]{meyn_tweedie_glynn_2009} and \cite[Theorem 2.8]{revuz1975markov}) that the probability measure of the event $\Reach_{{t}}^{s_0}(A)$ with $t > 0$ can be expressed as an integral with respect to the transition kernel of the 
Markov chain:
\begin{align*}
\mathbb{P}[\Reach_{{t}}^{s_0}(A)] &= \mathbf{1}_{\mathbb{R}^n \setminus A}(s_0)
\int_{s_1 \in \mathbb{R}^n \setminus A} \cdots
\int_{s_{t-1} \in \mathbb{R}^n \setminus A}  \int_{s_{t} \in A} 
~ \prod_{i=1}^t T(s_{i-1}, \mathrm{d}s_{i})\\
&= \int_{s_1 \in \mathbb{R}^n} \cdots \int_{s_{t} \in \mathbb{R}^n}
\left( \prod_{i=0}^{t-1} \mathbf{1}_{\mathbb{R}^n \setminus A}(s_i) \right) \mathbf{1}_{A}(s_t) \prod_{i=1}^{t} T(s_{i-1}, \mathrm{d}s_{i})
\end{align*}
As a consequence, we can express the probability measure of 
$\Reach_{t+1}^{s_0}(A)$ as follows:
\begin{multline*}
\mathbb{P}[\Reach_{t+1}^{s_0}(A)] = \mathbf{1}_{\mathbb{R}^n \setminus A}(s_{0})
\int_{s_1 \in \Real^n} 
\\
\underbrace{\int_{s_2 \in \Real^n} \cdots \int_{s_{t+1} \in \Real^n}
\left( \prod_{i=1}^t \mathbf{1}_{\mathbb{R}^n \setminus A}(s_i) \right)
\mathbf{1}_{A}(s_{t+1}) 
\left(\prod_{i=2}^{ t+1 } T(s_{i-1}, \mathrm{d}s_{i}) \right)}_{\mathbb{P}[\Reach_{t}^{s_1}(A)]}
T(s_0, \mathrm{d}s_1)
\end{multline*}
Then, after substituting $s_0$ with $s$ and $s_1$ with $s'$, we derive Eq.~\eqref{eqn:standard-result-reach} as follows:
\begin{align*}
\mathbb{P}[\Reach_{{t+1}}^{s}(A)]
&= 
\mathbf{1}_{\mathbb{R}^n \setminus A}(s) 
\cdot 
\int_{\Real^n}
\mathbb{P}[\Reach_{{t}}^{s'}(A)] ~T(s, \mathrm{d}s')
& \\
&= 
\mathbf{1}_{\mathbb{R}^n \setminus A}(s) 
\cdot 
\mathbb{X}[\lambda s' \mathpunct{.} 
    \mathbb{P}[\Reach_{{t}}^{s'}(A)]
](s)
&\text{by \eqref{eqn:postex}}
\end{align*}
Finally, Eq.~\eqref{eqn:standard-result-zero}, which states that $\mathbb{P}[\Reach_{0}^{s}(A)] = \mathbf{1}_A(s)$ directly follows 
from the definition of $\Reach^s_0(A)$, as  
$\Reach_{0}^{s}(A) = \Omega$ when $s \in A$ and 
$\Reach_{0}^{s}(A) = \emptyset$ when $s \not\in A$.
\end{proof}

\begin{lemma} \label{lem:reachstep}
Let the event that a stochastic process $\{ X^s_t \}_{t \in \Nat}$ 
over state space $\Real^n$ 
initialised in state $s \in \Real^n$ 
reaches a target set $A \in \mathcal B(\Real^n)$ 
in at most time $t \in \Nat$ 
be 
\begin{equation}\label{eqn:reach-union}
    \Reach_{\leq t}^{s}(A) = \cup \{~\Reach_{i}^{s}(A) \colon 0 \leq i \leq t~\}. 
\end{equation}
Then, $\Reach_{\leq t}^{s}(A)$ is measurable and its probability measure can be expressed as follows:
\begin{subequations}
    \begin{align}
        \mathbb{P}[\Reach_{\leq t + 1}^{s}(A)]
    &= 
    \mathbf{1}_A(s) 
    + 
    \mathbf{1}_{\Real^n \setminus A}(s) 
    \cdot 
    \mathbb{X}[\lambda s' \mathpunct{.} 
    \mathbb{P}[\Reach_{\leq t}^{s'}(A)]](s), 
    \label{eqn:reachafterzero}\\
    \mathbb{P}[
        \Reach_{\leq 0}^{s}(A)
    ] &= \mathbf{1}_A(s).
    \label{eqn:reachbeforezero}
    \end{align}
\end{subequations}
\end{lemma}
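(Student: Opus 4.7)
My plan is to deduce the lemma directly from Lemma~\ref{lem:standard-lemma} by combining pairwise disjointness of the exact-hitting-time events $\Reach_i^s(A)$ with linearity of the post-expectation operator. Measurability of $\Reach_{\leq t}^s(A)$ is immediate from Equation~\eqref{eqn:reach-union}, since it is a finite union of sets that are measurable by Lemma~\ref{lem:standard-lemma}.

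I would first observe that the definition in Equation~\eqref{eqn:reach1} forces the events $\Reach_i^s(A)$ to be pairwise disjoint for distinct $i$: an $\omega \in \Reach_i^s(A)$ has $X_j^s(\omega) \notin A$ for all $j < i$ and $X_i^s(\omega) \in A$, so $i$ is uniquely determined. Consequently $\mathbb{P}[\Reach_{\leq t+1}^{s}(A)] = \sum_{i=0}^{t+1} \mathbb{P}[\Reach_i^{s}(A)]$, and the base case Equation~\eqref{eqn:reachbeforezero} is just Equation~\eqref{eqn:standard-result-zero}.

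For the recursive step, I would split on whether $s \in A$. If $s \in A$, then $\Reach_0^{s}(A) = \Omega$ by convention and $\Reach_i^{s}(A) = \emptyset$ for $i \geq 1$, so $\mathbb{P}[\Reach_{\leq t+1}^{s}(A)] = 1 = \mathbf{1}_A(s)$, which matches Equation~\eqref{eqn:reachafterzero} since $\mathbf{1}_{\Real^n\setminus A}(s) = 0$. If $s \notin A$, the $i=0$ term vanishes, and applying Equation~\eqref{eqn:standard-result-reach} to each remaining summand rewrites the partial sum as
\[ \sum_{j=0}^{t} \mathbb{X}\bigl[\lambda s' \mathpunct{.} \mathbb{P}[\Reach_{j}^{s'}(A)]\bigr](s). \]
Exchanging the finite sum with $\mathbb{X}$ via linearity of the integral in Equation~\eqref{eqn:postex}, and then recollapsing the inner sum using disjointness at the successor state $s'$, yields $\mathbb{X}\bigl[\lambda s' \mathpunct{.} \mathbb{P}[\Reach_{\leq t}^{s'}(A)]\bigr](s)$, which is the desired right-hand side since $\mathbf{1}_{\Real^n\setminus A}(s) = 1$ and $\mathbf{1}_A(s) = 0$.

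The only subtle point, I expect, is the indicator-function bookkeeping across the case split: the two factors $\mathbf{1}_A(s)$ and $\mathbf{1}_{\Real^n\setminus A}(s)$ in Equation~\eqref{eqn:reachafterzero} are precisely what is needed to fuse the two cases into a single uniform identity. The interchange of the finite summation with the post-expectation is routine, being an instance of linearity of Lebesgue integration applied to a finite sum of non-negative measurable functions.
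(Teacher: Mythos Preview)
Your proposal is correct and follows essentially the same approach as the paper: both arguments use pairwise disjointness of the $\Reach_i^s(A)$ to write $\mathbb{P}[\Reach_{\leq t}^s(A)]$ as a finite sum, apply the recursion of Lemma~\ref{lem:standard-lemma} termwise, and then exchange the finite sum with the post-expectation via linearity of the integral. The only cosmetic difference is that the paper carries the indicator factors $\mathbf{1}_A(s)$ and $\mathbf{1}_{\Real^n\setminus A}(s)$ through the computation algebraically rather than performing your explicit case split on $s \in A$ versus $s \notin A$, but as you yourself note, these are two presentations of the same bookkeeping.
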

\begin{proof}
Since the sets in the family $\{ \Reach^{s}_{t}(A) \colon t \in \Nat\}$ are disjoint, we note that the union in \eqref{eqn:reach-union} that defines $\Reach^{s}_{\leq t}(A)$ is a union of disjoint sets. It follows that, by countable additivity of measure and disjointness of the union in \eqref{eqn:reach-union}, its probability measure is\linebreak given by  
\begin{equation}\label{eqn:reach-sum}
    \mathbb{P}[\Reach_{\leq t}^{s}(A)] = 
    \sum_{i=0}^t  \mathbb{P}[\Reach_{t}^{s}(A)].
\end{equation}
Consequently and thanks to \cref{lem:standard-lemma}, we have
    \begingroup
    \allowdisplaybreaks
    \begin{align*}
    \mathbb{P}[\Reach_{\leq {t+1}}^{s}(A)] 
    &= \mathbb{P}[\Reach_{{0}}^{s}(A)] + \sum_{i = 1}^{t + 1} \mathbb{P}[\Reach_{{i}}^{s}(A)] 
    &\text{by~\eqref{eqn:reach-sum}}\\
    &= \mathbf{1}_{A}(s) + \sum_{i = 1}^{t + 1} \mathbb{P}[\Reach_{{i}}^{s}(A)] &\text{by~\eqref{eqn:standard-result-zero}}\\
    &= \mathbf{1}_{A}(s) + \sum_{i = 0}^{t} \mathbb{P}[\Reach_{{i + 1}}^{s}(A)]\\
    &= 
    \mathbf{1}_{A}(s) + \sum_{i = 0}^{t}
    \mathbf{1}_{\mathbb{R}^n \setminus A}(s) \cdot 
    \mathbb{X}[\lambda s' \mathpunct{.} 
    \mathbb{P}[\Reach_{{i}}^{s'}(A)]](s)
    &\text{by \eqref{eqn:standard-result-reach}}\\
    &= 
    \mathbf{1}_{A}(s) + \sum_{i = 0}^{t}\left(
    \mathbf{1}_{\mathbb{R}^n \setminus A}(s) \cdot 
\int_{\Real^n}
\mathbb{P}[\Reach_{{i}}^{ s^\prime }(A)] ~T(s, \mathrm{d}s')
\right) &\text{by \eqref{eqn:postex}}\\
&= 
    \mathbf{1}_{A}(s) +
    \mathbf{1}_{\mathbb{R}^n \setminus A}(s) \cdot 
\int_{\Real^n}
\left( \sum_{i = 0}^{t} \mathbb{P}[\Reach_{{i}}^{s'}(A)] \right) ~T(s, \mathrm{d}s')
\\
&= 
    \mathbf{1}_{A}(s) +
    \mathbf{1}_{\mathbb{R}^n \setminus A}(s) \cdot 
\int_{\Real^n}
\mathbb{P}[\Reach_{{\leq t}}^{s'}(A)] ~T(s, \mathrm{d}s')
&\text{by~\eqref{eqn:reach-sum}}\\
&=\mathbf{1}_A(s) 
    + 
    \mathbf{1}_{\Real^n \setminus A}(s) 
    \cdot 
    \mathbb{X}[\lambda s' \mathpunct{.} 
    \mathbb{P}[\Reach_{{\leq t}}^{s'}(A)]](s)
    &\text{by~\eqref{eqn:postex}}.
    \end{align*}\endgroup
Furthermore, we also have that 
\begin{align*}
\mathbb{P}[\Reach_{\leq 0}^{s_0}(A)]
&= \mathbb{P}[\Reach_{0}^{s_0}(A)]
= \mathbf{1}_{A}(s_0)
&\text{by~\eqref{eqn:reach-sum} and \eqref{eqn:standard-result-zero}}. \tag*{\qedhere}
\end{align*}
\end{proof}

\begin{lemma}\label{lem:reachfinlimit}
Let the event that a stochastic process $\{ X^s_t \}_{t \in \Nat}$ 
over state space $\Real^n$ 
initialised in state $s \in \Real^n$ 
reaches a target set $A \in \mathcal B(\Real^n)$ 
in finite time be 
\begin{equation}\label{eqn:reachfin}
    \Reach_{\mathrm{fin}}^s(A) = \cup \left\{\Reach_{ t}^{s}(A) \colon t \in \Nat\right\} = 
    \cup \left\{\Reach_{\leq t}^{s}(A) \colon t \in \Nat\right\}.
\end{equation}
Then, $\Reach_{\mathrm{fin}}^s(A)$ is measurable and its probability measure (i.e.\ the reachability probability of the target set) may be expressed as follows:
    \begin{equation}\label{eqn:reachfinlim}
        \mathbb P [\Reach_{\mathrm{fin}}^s(A)] = \lim_{t \to \infty}\mathbb P [\Reach_{\leq t}^s(A)]. 
    \end{equation}
\end{lemma}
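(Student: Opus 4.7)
The plan is to deduce this from standard measure-theoretic continuity, using the two preceding lemmas for measurability of the building blocks.

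First I would verify that the two set-theoretic expressions for $\Reach_{\mathrm{fin}}^s(A)$ in \eqref{eqn:reachfin} genuinely coincide. Since $\Reach_t^s(A) \subseteq \Reach_{\leq t}^s(A)$ by definition of $\Reach_{\leq t}^s(A)$ in \eqref{eqn:reach-union}, the union on the left is contained in the union on the right. Conversely, $\Reach_{\leq t}^s(A) = \bigcup_{i=0}^{t} \Reach_i^s(A)$ is itself a union of the sets appearing on the left, so the reverse inclusion holds too. I would also note in passing (though it is not strictly needed) that the family $\{\Reach_t^s(A)\}_{t \in \Nat}$ is pairwise disjoint, because on each sample path there is a unique first hitting time, while the family $\{\Reach_{\leq t}^s(A)\}_{t \in \Nat}$ is monotone increasing in $t$.

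Second, I would address measurability. Each event $\Reach_t^s(A)$ lies in $\mathcal{F}$ by \cref{lem:standard-lemma}, hence the finite union $\Reach_{\leq t}^s(A)$ lies in $\mathcal{F}$ by \cref{lem:reachstep}, and therefore the countable union $\Reach_{\mathrm{fin}}^s(A)$ lies in $\mathcal{F}$ because $\sigma$-algebras are closed under countable unions.

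Third, I would apply continuity of probability from below. Writing $E_t := \Reach_{\leq t}^s(A)$, the sequence satisfies $E_0 \subseteq E_1 \subseteq E_2 \subseteq \cdots$, and its union is $\Reach_{\mathrm{fin}}^s(A)$. A standard consequence of the countable additivity of $\mathbb{P}$ is that for any increasing sequence of measurable sets, $\mathbb{P}\bigl[\bigcup_t E_t\bigr] = \lim_{t\to\infty}\mathbb{P}[E_t]$; the limit exists because $\mathbb{P}[E_t]$ is a bounded, non-decreasing sequence of reals. Substituting the definitions yields \eqref{eqn:reachfinlim}.

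No step appears genuinely hard here: the only subtlety is ensuring one has the right monotone representation before invoking continuity from below, which is precisely why I would go through the $\Reach_{\leq t}^s$ form rather than the disjoint $\Reach_t^s$ form (for the latter one would instead invoke countable additivity and argue that partial sums converge, which is equivalent but an extra step). The measurability bookkeeping is routine given the two earlier lemmas.
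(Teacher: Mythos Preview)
Your proposal is correct and follows essentially the same approach as the paper: establish that $\{\Reach_{\leq t}^s(A)\}_{t\in\Nat}$ is an increasing sequence whose union is $\Reach_{\mathrm{fin}}^s(A)$, then invoke continuity of measure from below (the paper packages this as its \cref{thm:inc-union}). You are slightly more explicit than the paper in verifying the equality of the two unions in \eqref{eqn:reachfin} and in spelling out the measurability argument, but the substance is identical.
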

\begin{proof}
We recall from Eq.~\eqref{eqn:reach-union} that $\Reach_{\leq t}^{s}(A) = \cup \left\{\Reach_{i}^{s}(A) \colon 0 \leq i \leq t \right\}$, 
and we note that this implies that
$\Reach_{\leq t}^{s}(A) \subseteq \Reach_{\leq t+1}^{s}(A)$. 
Then, we recall from Eq.~\eqref{eqn:reachfin} that $\Reach_{\mathrm{fin}}^s(A) = \cup \left\{\Reach_{\leq t}^{s}(A) \colon t \in \Nat\right\}$. 
Finally, we apply a standard result in measure theory regarding the measure of an increasing union \cite[Theorem 2.59]{axler2019measure} to conclude \
$\mathbb P [\Reach_{\mathrm{fin}}^s(A)] = \lim_{t \to \infty}\mathbb P [\Reach_{\leq t}^s(A)]$.\qedhere
\end{proof}

The recursive characterisation of the time-bounded reachability probability (\cref{lem:reachstep}) and the limit characterisation of the probability of reaching the target set in any finite time (\cref{lem:reachfinlimit}) provides the foundation for a fixpoint characterisation for the reachability probability (\cref{sec:fixpoint-characterisation}), in turn justifying the conditions that a supermartingale certificate must satisfy (\cref{sec:neuralism}) for it to provide a sound upper bound on the reachability probability.

Our method leverages neural networks to compute an upper bound 
for the reachability probability with respect to a given target set (cf.\ \cref{sec:neuralism}).
Notice that an appropriate choice of target set allows us to express a variety of other quantitative verification questions,
for which our method can provide an upper or lower bound. 
Next, by providing a suitable choice for the target set, we show how important quantitative verification questions for probabilistic programs and stochastic dynamical models can be characterised as instances of quantitative reachability.

\begin{description}
\item[Termination Analysis]
Let $G \in \mathcal B (\Real^n)$ be the guard set of a probabilistic while loop as in Eq.~\eqref{eq:loop}, i.e., the set of states for which 
that guard condition evaluates to true. 
The event that the loop terminates from initial state $s_0$ is $\Reach_{\mathrm{fin}}^{s_0}(\Real^n \setminus G)$. 
Our method computes an upper bound for the probability that the loop terminates or, dually, it computes a lower bound for the probability of non-termination. Notably, when this lower bound is greater than 0, then almost-sure termination is refuted.
\item[Assertion-violation Analysis]
Let $G \in \mathcal B (\Real^n)$ be the guard set of a 
probabilistic while loop and $A \in \mathcal B (\Real^n)$ be the satisfying set of an assertion placed at the beginning of the 
loop body. Given initial state $s_0$, the event that the assertion is eventually violated 
is $\Reach_{\mathrm{fin}}^{s_0}(G \setminus A)$. 
Our method computes an upper bound for the probability of assertion violation or, dually, a lower bound for its satisfaction. Note that  assertions in other positions of the body can be modelled similarly by using additional Boolean variables.
\item[Safety Verification] 
Let $B \in \mathcal B (\Real^n)$ be a set of undesirable states in a stochastic dynamical model. The event that the system is safe when initialised in $s_0$, i.e., it never reaches an undesirable state, 
is given by 
$\Omega \setminus \Reach_{\mathrm{fin}}^{s_0}(B)$. 
Our method computes a lower bound for the probability that the system is safe, which is $1 - \mathbb P [\Reach_{\mathrm{fin}}^{s_0}(B)]$.
\item[Invariant Verification] Let $I \in \mathcal B (\Real^n)$ be a candidate invariant set. The event that $I$ is invariant when the system is initialised in $s_0$ is $\Omega \setminus \Reach_{\mathrm{fin}}^{s_0}(\Real^n \setminus I)$.
Our method computes a lower bound for the probability 
that set $I$ is invariant, which is $1 - \mathbb P [\Reach_{\mathrm{fin}}^{s_0}(\Real^n \setminus I)] $. Note that if $s_0 \notin I$, this definition yields a trivial lower bound of zero for the probability of invariance.

\end{description}

\section{Fixpoint Characterisation of Quantitative Reachability}\label{sec:fixpoint-characterisation}


Our supermartingale certificate proof-rule (\cref{thm:V-of-s})
relies on the order-theoretic characterisation of reachability probabilities 
in probabilistic programs \cite{DBLP:conf/atva/TakisakaOUH18}. 
This previous seminal work considered a more general setting allowing 
for both non-determinism and arbitrary program invariants.   
For clarity and self-containment, and to build intuition about 
order-theoretic reasoning in our context, we present a streamlined analysis that mirrors the soundness proof for non-negative supermartingales \cite[Proposition 4.2]{DBLP:conf/atva/TakisakaOUH18}
in a setting without non-determinism or invariants.

We define the space of candidate certificates as the set of $({\mathcal{B}}(\Real^n), \mathcal{B}(\Real^n_{\geq 0} \cup \{ \infty \}))$-measurable functions $\mathcal{V} \subset \Real^n \to \Real_{\geq 0} \cup \{ \infty \}$. Furthermore, we define the ordering relation $\leq$ between two functions $U,V \in \mathcal{V}$ pointwise as follows:
\begin{equation}\label{eqn:order}
     U \leq V \Leftrightarrow \forall s \in \Real^n \mathpunct{.} U(s) \leq V(s).
\end{equation}
The set of functions $\mathcal{V}$ under the ordering $\leq$ defines a 
pointed complete partially ordered set (pointed CPO) \cite[Definition 8.1]{davey_priestley_2002} with the constant zero function as its bottom element; 
we overload the symbol $\mathbf{0} \in \mathcal{V}$ to denote the constant zero function. 

\begin{lemma}\label{lem:cpo}
$(\mathcal{V}, \leq)$ is a pointed CPO with bottom element $\mathbf{0}$, and with each countable increasing chain $\{V_i \}_{i \in \mathbb{N}}$ having least upper bound:
\begin{equation}\label{eqn:lublim}
(\sup~ \{V_i \colon i \in \Nat \})(s)
            = \sup~\{ V_i(s) : i \in \Nat \} 
            = \lim_{i \to \infty} V_i(s).
\end{equation}
\end{lemma}
\begin{proof}

The fact that the pointwise relation $\leq$ on functions is reflexive, transitive and anti-symmetric follows from the properties of the natural ordering relation $\leq$ on the set $\mathbb{R}_{\geq 0} \cup \{ \infty \}$. This shows that $(\mathcal{V}, \leq)$ is a partial order. The constant function $\mathbf{0}$ is the bottom element of the partial order because $\mathcal{V}$ contains only non-negative functions.

To show that the partial order is also complete, we need to show that the least upper bound of every countable increasing chain $\{V_i \}_{ i \in \mathbb{N} }$ in $(\mathcal{V}, \leq)$ is also in $\mathcal{V}$; in other words, that the least upper bound is $(\mathcal{B}(\Real^n), \mathcal{B}(\Real_{\geq 0} \cup \{ \infty \}))$-measurable. To this end, we note that for all $s \in \Real^n$ we have $(\sup~\{ V_i \colon i \in \Nat \})(s) = \sup ~ \{ 
V_i(s) \colon i \in \Nat
\}$ owing to a standard result \cite[Lemma 8.5(ii), p.177]{davey_priestley_2002}, and that the pointwise supremum of a countable family of measurable functions is also measurable {\protect\cite[Example 8, p.23]{pollard_2001}}. 

Finally, since for any $s \in \Real^n$ we have $\{V_i(s)\}_{i \in \mathbb{N}}$ is a non-decreasing sequence in the extended non-negative reals (namely, the set $\mathbb{R}_{\geq 0} \cup \{ \infty \}$), we have by the monotone convergence property of non-decreasing real sequences \cite[Theorem 15,  p.21]{royden2010real} that
    \begin{equation*}
        \lim_{i \to \infty} V_i(s) = \sup~\{ V_i(s) \colon i \in \mathbb{N}\}.\qedhere
    \end{equation*}
\end{proof}

The exact reachability probability is a measurable function from states to non-negative reals, and is thus an element of the pointed CPO $\mathcal{V}$. By \cref{lem:reachstep} we have a recursive characterisation of increasingly tight approximations to this probability in terms of the higher-order endofunction
$\Phi : \mathcal{V} \to \mathcal{V}$ defined as
\begin{equation}\label{def:phi}
    \Phi(V)(s) = 
    \mathbf{1}_{A}(s) + \mathbf{1}_{\Real^n \setminus A}(s) \cdot \mathbb{X}[V](s).
\end{equation} 
Furthermore, \cref{lem:reachfinlimit} states that the exact reachability probability is the limit of its approximations.
We formalise this argument by an application of Kleene's Fixpoint Theorem (see \cite[Section 8.15, p.183]{davey_priestley_2002} and \cite[Theorem 5.11]{DBLP:books/daglib/0070910}) over the pointed CPO $(\mathcal{V}, \leq)$ over the space of certificates. A technical requirement of Kleene's Fixpoint Theorem 
is that $\Phi$ is $\omega$-continuous \cite[Definition p.71]{DBLP:books/daglib/0070910}. A function is $\omega$-continuous if it is monotonic and preserves least upper bounds of countable increasing chains. 
To show that the higher-order function $\Phi$ is $\omega$-continuous, we first show that the post-expectation operator ${\mathbb{X}}$ (which is in turn a higher-order function in $\mathcal{V}$) is $\omega$-continuous.

\begin{lemma}\label{lem:nexttime-cont}
$\mathbb{X}$ is $\omega$-continuous.
\end{lemma}
\begin{proof}
Recalling the definition of $\omega$-continuity for an endofunction on a pointed CPO \cite[Definition p.71]{DBLP:books/daglib/0070910}, we show that $\mathbb{X}$ satisfies the following two properties:
\begin{itemize}
    \item $\mathbb{X}$ {\em is monotonic}. Suppose $U,V \in \mathcal{V}$ with $U \leq V$. Then for all $s \in \mathbb{R}^n$:
    \begin{align*}
        \mathbb{X}[U](s) &= \int_{\mathbb{R}^n} U(s^\prime) ~T(s, \mathrm{d}s^\prime)\\
        &\leq 
        \int_{\mathbb{R}^n}
        V(s^\prime) ~T(s, \mathrm{d}s^\prime)
        &&\text{by $U \leq V$}\\
        &= \mathbb{X}[V](s).
    \end{align*}

    \item $\mathbb{X}$ {\em preserves least upper bounds of countable increasing chains}. Let $V_0 \leq V_1 \leq \ldots $ be a countable increasing chain in $\mathcal{V}$. As $\mathbb{X}$ is monotone, $\mathbb{X}[V_0] \leq \mathbb{X}[V_1] \leq \ldots$ is also a countable increasing chain. 
    We note that since ${\mathbb{X}}$ is an integral, it satisfies the Monotone Convergence Theorem (MCT) \cite[Theorem 12]{pollard_2001}, which can be invoked in the following since $\{V_i\}_{i \in \mathbb{N}}$ is a sequence of non-negative measurable functions that converge monotonically to the function $\sup~\{ V_i \colon i \in \mathbb{N}\}$.
    Thus, for all $s \in \mathbb{R}^n$, we have that
    \begin{align*}
        (\sup~ \{ 
        \mathbb{X}[V_i] \colon i \in \mathbb{N}
        \})(s)
        &= \lim_{i \to \infty} \mathbb{X}[V_i](s)
        &\text{by~\eqref{eqn:lublim}, \cref{lem:cpo}}\\
        &= \lim_{i \to \infty} 
        \int_{\Real^n} V_i(s^\prime)~T(s, \mathrm{d}s^\prime)
        &\text{by \eqref{eqn:postex}}
        \\
        &= \int_{\Real^n} (\sup~\{ V_i \colon i \in \mathbb{N} \})(s^\prime)~T(s, \mathrm{d}s^\prime)
        &\text{\eqref{eqn:lublim} \& MCT}
        \\
        &= \mathbb{X}[\sup~\{ V_i \colon i \in \mathbb{N} \}](s) \tag*{\qedhere}
    \end{align*}
\end{itemize}
\end{proof}
\begin{lemma}\label{lem:scott}
$\Phi$ is $\omega$-continuous.
\end{lemma}
\begin{proof}
We show that $\Phi$ satisfies the following two properties.
\begin{itemize}
\item $\Phi$ is {\em monotonic}.
    Let $s \in \Real^n$ be arbitrary and assume $U \leq V$. Then
        \begin{align*}
            \Phi(U)(s) &= 
            \mathbf{1}_{A}(s) + \mathbf{1}_{\Real^n \setminus A}(s) \cdot \mathbb{X}[U](s)\\
            &
            \leq \mathbf{1}_{A}(s) + \mathbf{1}_{\Real^n \setminus A}(s) \cdot \mathbb{X}[V](s) &\text{by monotonicity of $\mathbb{X}$, \cref{lem:nexttime-cont}}\\
            &= \Phi(V)(s),
        \end{align*}
    therefore $\Phi(U) \leq \Phi(V)$.

\item $\Phi$ {\em preserves least upper bounds of countable increasing chains}. Suppose $\{ V_i \colon i \in \mathbb{N}\}$ is a countable increasing chain in $(\mathcal{V}, \leq)$. 
Let $s \in \Real^n$ be arbitrary. Then
        \begin{align*}
            \Phi(\sup~ \{V_i \colon i \in \Nat \})(s)
            &= 
            \mathbf{1}_{A}(s) + \mathbf{1}_{\Real^n \setminus A}(s) \cdot \mathbb{X}[
            \sup~ \{V_i \colon i \in \Nat \}
            ](s) &\text{by \eqref{def:phi}}\\
            &= 
            \mathbf{1}_{A}(s) + \mathbf{1}_{\Real^n \setminus A}(s) \cdot 
            (\sup~ \{\mathbb{X}[V_i] \colon i \in \Nat \}) (s) &\text{by \cref{lem:nexttime-cont}}\\
            &= 
            \mathbf{1}_{A}(s) + \mathbf{1}_{\Real^n \setminus A}(s) \cdot 
            (\sup~\{\mathbb{X}[V_i](s) \colon i \in \Nat \}) &\text{by \eqref{eqn:lublim}}\\
            &= 
            \sup~ \{
            (\mathbf{1}_{A}(s) + \mathbf{1}_{\Real^n \setminus A}(s)
            \cdot \mathbb{X}[V_i](s)) \colon i \in \Nat
            \}\\
            &= 
            \sup~ \{
            \Phi[V_i](s) \colon i \in \Nat
            \} &\text{by \eqref{def:phi}}\\
            &= \left(\sup~ \{ \Phi(V_i) \colon i \in \Nat \}\right)(s)
            &\text{by \eqref{eqn:lublim};}
        \end{align*}
we note that the steps above relying on Eq.~\eqref{eqn:lublim} use the property that suprema in the pointed CPO $(\mathcal{V}, \sqsubseteq)$ are defined pointwise.
Finally, the derivation above establishes that $\Phi(\sup~ \{ V_i \colon i \in \mathbb{N}\}) = \sup~ \{ \Phi(V_i) \colon i \in \mathbb{N} \}$.
\qedhere
\end{itemize}
\end{proof}

To demonstrate that the exact reachability probability is the least fixpoint of the higher-order function $\Phi$, we leverage Kleene's Fixpoint Theorem which states that since $\Phi$ is $\omega$-continuous, its least fixpoint is given by 
\begin{equation}
    \sup~\{ \Phi^t (\mathbf{0}) \colon t \in \Nat \}
\end{equation}
where we let $\Phi^t$ denote the $t$-fold application of $\Phi$ to its argument.
To relate this to quantitative reachability, we show in \cref{lem:reachlublambda} that for any $t \in \Nat$, the function $\lambda s \mathpunct{.} \mathbb{P}[\Reach_{\leq t}^s(A)]$ is equal to $\Phi^{t+1}(\mathbf{0})$ (which is simply a re-statement of \cref{lem:reachstep}). In combination with Eq.~\eqref{eqn:reachfinlim}, this yields the following result. 
\begin{lemma}\label{lem:reachlublambda}
    $\lambda s \mathpunct{.} \mathbb{P}[\Reach_{\mathrm{fin}}^{s}(A)]$ is the least fixpoint of $\Phi$.
\end{lemma}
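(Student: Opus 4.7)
The plan is to apply Kleene's Fixpoint Theorem (\cref{thm:kleene}) to $\Phi$. By \cref{lem:scott}, we already know that $\Phi$ is a Scott-continuous endofunction on the pointed CPO $(\mathcal{V}, \leq)$ with bottom element $0$, so its least fixpoint exists and is given by $\sup~\{\Phi^n(0) : n \in \Nat\}$. Thus it suffices to show the pointwise identity
\begin{equation*}
    \sup~\{\Phi^n(0)(s) : n \in \Nat\} = \mathbb{P}[\Reach_{\mathrm{fin}}^{s}(A)] \quad \text{for every } s \in \Real^n,
\end{equation*}
which by \eqref{eqn:lublim} is equivalent to $\lim_{n \to \infty} \Phi^n(0)(s) = \mathbb{P}[\Reach_{\mathrm{fin}}^{s}(A)]$.

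The key step is an induction on $n$ showing that $\Phi^{n+1}(0)(s) = \mathbb{P}[\Reach_{\leq n}^{s}(A)]$ for every $n \in \Nat$. For the base case $n = 0$, we compute
\begin{equation*}
    \Phi^{1}(0)(s) = \mathbf{1}_A(s) + \mathbf{1}_{\Real^n \setminus A}(s) \cdot \mathbb{X}[0](s) = \mathbf{1}_A(s) = \mathbb{P}[\Reach_{\leq 0}^{s}(A)],
\end{equation*}
using \eqref{eqn:reachbeforezero}. For the inductive step, assuming $\Phi^{n+1}(0)(s') = \mathbb{P}[\Reach_{\leq n}^{s'}(A)]$ for every $s'$, we unfold the definition of $\Phi$ and apply \eqref{eqn:reachafterzero} from \cref{lem:reachstep} to conclude $\Phi^{n+2}(0)(s) = \mathbb{P}[\Reach_{\leq n+1}^{s}(A)]$.

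Taking the limit in $n$ and invoking \cref{lem:reachfinlimit} then yields
\begin{equation*}
    \lim_{n \to \infty} \Phi^{n+1}(0)(s) = \lim_{n \to \infty} \mathbb{P}[\Reach_{\leq n}^{s}(A)] = \mathbb{P}[\Reach_{\mathrm{fin}}^{s}(A)],
\end{equation*}
which, combined with \eqref{eqn:lublim} and \cref{thm:kleene}, gives the claim. A minor bookkeeping step is to verify that $\lambda s. \mathbb{P}[\Reach_{\mathrm{fin}}^{s}(A)]$ actually lies in $\mathcal{V}$, i.e., is measurable; this follows because each $\Phi^n(0)$ is measurable (as $\mathcal{V}$ is closed under $\Phi$) and the pointwise supremum of a countable family of measurable functions is measurable. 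I expect no genuine obstacle beyond carefully aligning indices in the induction, since the real work is packaged inside \cref{lem:scott} (Scott-continuity of $\Phi$) and \cref{lem:reachstep,lem:reachfinlimit} (recursive and limit characterisations of the reachability probabilities).
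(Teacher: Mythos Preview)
Your proposal is correct and follows essentially the same approach as the paper: invoke Kleene's Fixpoint Theorem via \cref{lem:scott}, prove by induction on $n$ that $\Phi^{n+1}(0)(s) = \mathbb{P}[\Reach_{\leq n}^{s}(A)]$ using \eqref{eqn:reachbeforezero} and \eqref{eqn:reachafterzero}, and then pass to the limit via \eqref{eqn:reachfinlim} and \eqref{eqn:lublim}. The only difference is cosmetic---you add an explicit remark about measurability of the limit function, which the paper leaves implicit.
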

\begin{proof}
We need to prove
\begin{equation}
    \lambda s . \mathbb{P}[\Reach_{\mathrm{fin}}^{s}(A)]
    = (\sup~ \{ \Phi^t(\mathbf{0}) \colon t \in \mathbb{N}\}).
    \label{eqn:reachlublambda}
\end{equation}
To prove Eq.~\eqref{eqn:reachlublambda} we first show that for all $t \in \Nat$ and $s \in \Real^n$, we have
    \begin{equation}
    \mathbb{P}[\Reach_{\leq t}^{s}(A)]
    = \Phi^{t+1}(\mathbf{0})(s).\label{eqn:kleene-1}
    \end{equation}
    We proceed by induction on $t \in \Nat$, that is, we show
    \begin{itemize}
    \item $\lambda s \mathpunct{.} \mathbb{P}[\Reach_{\leq 0}^{s}(A)]
    = \Phi^{1}(\mathbf{0})$, and 
    \item 
    $\lambda s \mathpunct{.} \mathbb{P}[\Reach_{\leq t}^{s}(A)]
    = \Phi^{t+1}(\mathbf{0})$
    implies
    $\lambda s \mathpunct{.} \mathbb{P}[\Reach_{\leq t+1}^{s}(A)]
    = \Phi^{t+2}(\mathbf{0})$.
    \end{itemize}
    For the base case, we let $s \in \Real^n$ and have
    \begin{align*}
\mathbb{P}[\Reach_{\leq 0}^{s}(A)]  
&= \mathbf{1}_A(s) &\text{by \eqref{eqn:reachbeforezero}}\\
&= \mathbf{1}_A(s) + \mathbf{1}_{\Real^n \setminus A}(s) \cdot \mathbb{X}[\mathbf{0}](s)
&\text{because $\mathbb{X}[\mathbf{0}] = \mathbf{0}$}\\
&= \Phi^1(\mathbf{0})(s) &\text{by \eqref{def:phi}.}
    \end{align*}
For the inductive case, we let $s \in \Real^n$ and leverage \cref{lem:reachstep} as follows:
\begin{align*}
    \mathbb{P}[\Reach_{\leq t + 1}^{s}(A)]
    &= 
    \mathbf{1}_A(s) 
    + 
    \mathbf{1}_{\Real^n \setminus A}(s) 
    \cdot 
    \mathbb{X}[
    \lambda s^\prime . 
    \mathbb{P}[\Reach_{\leq t}^{s^\prime}(A)]
    ](s)
    &\text{by \eqref{eqn:reachafterzero}}    \\
    &= \Phi(\lambda s^\prime .  \mathbb{P}[\Reach_{\leq t}^{s^\prime}(A)])(s) &\text{by \eqref{def:phi}}\\
    &= \Phi(\Phi^{t+1}(\mathbf{0}))(s)&\text{by IH}\\
    &= \Phi^{t+2}(\mathbf{0})(s)
\end{align*}

Next, we show that \eqref{eqn:reachlublambda} is a consequence of \cref{lem:cpo}, noting that $\{\Phi^t(\mathbf{0}) \}_{ t \in \mathbb{N}}$ is a countable increasing chain in  $(\mathcal{V},\leq)$:
\begin{align*}
    \mathbb{P}[\Reach_{\mathrm{fin}}^{s}(A)] &= \lim_{t \to \infty} \mathbb{P}[\Reach_{\leq t}^{s}(A)] &\text{by \eqref{eqn:reachfinlim}}\\
    &= \lim_{t \to \infty} \Phi^{t + 1}(\mathbf{0})(s) &\text{by \eqref{eqn:kleene-1}}\\
    &= \lim_{t \to \infty} \Phi^{t}(\mathbf{0})(s) \\
    &= (\sup~\{ \Phi^{t}(\mathbf{0}) \colon t \in \Nat \})(s) &\text{by~\eqref{eqn:lublim}, \cref{lem:cpo}} &\qedhere
\end{align*}
\end{proof}

In summary, \cref{lem:reachlublambda} characterises the exact reachability probability as the least fixpoint of a $\omega$-continuous endofunction $\Phi$ on the set $\mathcal{V}$. We use this fixpoint characterisation in \cref{sec:neuralism} to show how a supermartingale certificate establishes an upper bound on the probability of reaching a target state in any number of steps.
    
\section{Neural Supermartingales for Quantitative Verification}\label{sec:neuralism}

Supermartingale certificates provide a flexible and powerful 
theoretical framework for the formal verification of
probabilistic models with infinite state spaces (see \cite[Section 5.2]{Durrett2010}, \cite[Definition 9 and Theorem 4]{DBLP:conf/cav/ChakarovS13}).
Not only have supermartingales been applied to qualitative questions,  
such as almost-sure termination analysis of probabilistic programs 
and almost-sure stability analysis of stochastic dynamical models, 
but also to quantitative reachability verification, as in this work.   
Specifically, this is enabled by the theory of non-negative repulsing supermartingales and of stochastic invariants~\cite{DBLP:journals/toplas/TakisakaOUH21,DBLP:conf/cav/ChatterjeeGMZ22,DBLP:conf/popl/ChatterjeeNZ17}. 

Our method builds upon the theory of non-negative repulsing supermartingales and stochastic invariant indicators 
and applies it to take advantage of the expressive power of neural networks and the 
flexibility of machine learning and counterexample-guided inductive synthesis algorithms. We refer to a function $V$ that is non-increasing in expectation outside of the target set as a \textit{supermartingale}. 
This is because $V(X_t^s)$ defines a supermartingale process \cite[Section 5.2]{Durrett2010}, where $X_t^s$ is the stochastic process corresponding to the semantics of the probabilistic model, c.f.\ Eq.~\eqref{eqn:stochprocess}. 
Our method hinges on the following theorem.

\begin{theorem}\label{thm:V-of-s}
Let $A \in \mathcal{B}(\Real^n)$ be a target set, and let $V \colon \Real^n \to \Real_{\geq 0} \cup \{\infty\}$ be a non-negative function that satisfies the following two conditions:
\begin{subequations}
\label{eq:spec}
\begin{flalign}    
     &\text{(indicating condition)} &&\forall s \in A \colon V(s) \geq 1, && \label{eq:spec_ind}\\ 
     &\text{(non-increasing condition)} &&\forall s \not\in A \colon \mathbb{X}[V](s) \leq V(s),&&\phantom{aaaaaaaa}  \label{eq:spec-dec}
\end{flalign}
\end{subequations}
Then for all $s \in \Real^n$ we have $V(s) \geq \mathbb P [\Reach_{\mathrm{fin}}^{s}(A)]$.
\end{theorem}
\begin{proof}

Suppose that $V \in \mathcal{V}$ is a function that satisfies  \eqref{eq:spec_ind} and \eqref{eq:spec-dec}. We show that $V$ is a pre-fixpoint of $\Phi$ (defined in Eq.~\ref{def:phi}), namely, that $\Phi(V) \leq V$.
We proceed by cases:
\begin{align*}
    &\text{Case $s \in A$: }~\Phi(V)(s)= 
    \mathbf{1}_{A}(s) + \mathbf{1}_{\Real^n \setminus A}(s) \cdot \mathbb{X}[V](s) = 1 \leq V(s) &\text{by \eqref{eq:spec_ind}},\\
    &\text{Case $s \notin A$: }~
    \Phi(V)(s)= 
    \mathbf{1}_{A}(s) + \mathbf{1}_{\Real^n \setminus A}(s) \cdot \mathbb{X}[V](s)
    = \mathbb{X}[V](s)
    \leq V(s) &\text{by \eqref{eq:spec-dec}}.
\end{align*}    
This establishes that $V$ is a pre-fixpoint of $\Phi$.

Next, we recall the standard result \cite[Section 8.20, p.186]{davey_priestley_2002} that (if it exists) the least fixpoint of a monotone endofunction is equal to its least pre-fixpoint (and therefore, the least fixpoint is less than or equal to \textit{any} pre-fixpoint). In our case, since the endofunction $\Phi : \mathcal{V} \to \mathcal{V}$ is $\omega$-continuous, its least fixpoint exists and is equal (by \cref{lem:reachlublambda}) to the exact reachability probability, $\lambda s \mathpunct{.} \mathbb{P}[\Reach_{\mathrm{fin}}^{s}(A)]$. Thus, as $V$ is a pre-fixpoint of $\Phi$, we have that $\lambda s \mathpunct{.} \mathbb{P}[\Reach_{\mathrm{fin}}^{s}(A)] \leq V$ holds true, 
which corresponds to the statement of \cref{thm:V-of-s}.\qedhere
\end{proof}

As we show in detail in \cref{sec:cegis}, we compute a supermartingale that satisfies the two criteria \eqref{eq:spec_ind} and \eqref{eq:spec-dec},
while also minimising its output over the initial state $s_0$. When the initial state is chosen nondeterministically from a set $S_0$, we instead minimise the output over all states in the set. As a consequence of \cref{thm:V-of-s}, the maximum of $V$ over $S_0$ is a sound upper bound for the reachability probability.

In this work, we template $V$ as a neural network with $n$ input neurons, $l$ hidden layers with respectively 
$h_1, \dots h_l$ neurons in each hidden layer, and 1 output neuron.
We guarantee a-priori that the function's output will be non-negative over the entire domain using the following architecture:
\begin{equation}
\label{eq:net}
    V(x) = \operatorname{sum} \left( \sigma_l \circ \dots \circ \sigma_1(x) \right), \quad 
    \sigma_i(z) = 
    (\relu(\mathbf{w}^T_{i,1} z + \mathbf{b}_{i,1}), \dots, \relu(\mathbf{w}^T_{i,h_i} z + \mathbf{b}_{i,h_i})), 
\end{equation}
where $\operatorname{sum}(z_{l, 1}, \ldots, z_{l, h_l}) = \sum_{k = 1}^{h_l} z_{l, k}$, $\relu{}(z) = \max\{0, z\}$, and $\mathbf{w}^T_{i,j} \in \mathbb{Q}^{h_{i-1}}$ (defining $h_0 = n$, the number of input neurons) and 
$\mathbf{b}_{i,j} \in \mathbb{Q}$ are respectively the weight vector and the bias parameter for the inputs to neuron $j$ at layer $i$. 
Then, our method trains the neural network to satisfy the two criteria \eqref{eq:spec_ind} and \eqref{eq:spec-dec}, 
while also minimising its output on the initial condition. 

\begin{figure}
\centering
    \begin{tabular}{ccc}
    \begin{minipage}[b]{0.4\linewidth}
\begin{lstlisting}
while x > 0 do
    assert(x <= 10);
    p ~ Bernoulli(0.5);
    if p == 1 then
        x -= 2
    else
        x += 1
    fi
od
\end{lstlisting}
\end{minipage}
    &\quad &
    \begin{tikzpicture}[outer sep=0, inner sep=0]
        \begin{axis}[axis x line=center,
                    axis y line=center,
                    xmin=-3,xmax=12,
                    xtick={-2, -1, 0,1,...,11},
                    ytick={0,1},height=5.5cm,width=0.55\linewidth,
                    x label style={anchor=north, below=5mm, left=17mm},
                    y label style={rotate=90,anchor=south, left=20mm, above=1mm},
                    xlabel=Initial value of {\tt x},
                    ylabel=Probability]

            \addplot [black!60, very thick] table \repulsedatastepfunc;
            \addplot [orange, very thick, domain=-3:11] {x/12+1/6};
            \addplot [blue,very thick] coordinates {(-3, 0.00) (-2,0.00) (0,0.04) (1,0.06) (2,0.09) (3,0.12) (4,0.16) (5,0.21) (6,0.27) (7,0.36) (8,0.5) (9,0.7) (10,1) (10.5, 1.15)};
            \addplot [very thick, magenta] coordinates {(-3, 0.01) (-2, 0.01) (0, 0.069) (5, 0.39) (10, 1) (10.5, 1.061)};
        \end{axis}
        \node at (2.9,2.2) {\color{orange}$V^{\text{lin}}$};
        \node at (3.0,1.55) {\color{magenta}$V^{\text{neur}}_{3}$};
        \node at (3.0,1) {\color{blue}$V^{\text{neur}}_{12}$};
        \node at (4.5, 0.8) {\color{black!60}{ $P_\text{reach}$ }};
    \end{tikzpicture}
    \end{tabular}
    \caption{Comparison between linear and neural supermartingale functions in tightness of bounds for the assertion violation probability of the program {\sf repulse}, shown on the left. On the right, the function ${V^\text{lin}}$ indicates the tightest linear supermartingale (under the restriction that {\tt x} is greater than $-2$). The functions ${V^\text{neur}_3}$ and ${V^\text{neur}_{12}}$ indicate single-layer neural supermartingales with 3 and 12 neurons respectively. The piecewise constant function $P_\text{reach}$ is the true probability of assertion violation, and indicates the ideal lower bound for the value of any supermartingale function.
    }

    \label{fig:Intuition}
\end{figure}
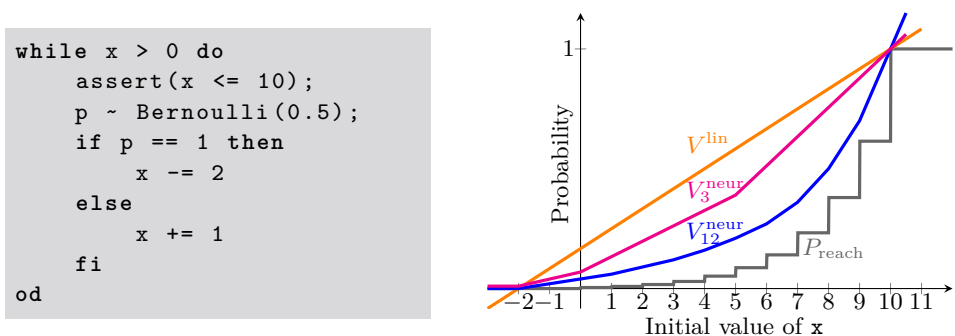

Using neural networks as templates of supermartingale functions
introduces non-trivial advantages with respect to symbolic methods 
for the synthesis of supermartingales. 
Firstly, neural supermartingales are able to better approximate the true probability of reachability and thus attain tighter upper bounds on it. 
Secondly, symbolic methods require deterministic invariants that overapproximate the set of reachable states, and to suitably restrict the domain of the template. Figure~\ref{fig:Intuition} illustrates using an example the advantages of neural certificates with respect to linear supermartingales synthesised using Farkas' lemma.  
This example shows that increasing the number of neurons provides greater flexibility and allows the certificate to more tightly approximate the true probability. 
Moreover, this example shows that linear supermartingales require their domain to be restricted with an appropriate deterministic invariant (to enforce non-negativity). 
Notably, symbolic methods for the synthesis of polynomial supermartingales based on Putinar's Positivstellensatz also require compact deterministic invariants to be provided~\cite{DBLP:conf/cav/ChatterjeeGMZ22}. By contrast, neural supermartingales (whose output is always non-negative) achieve the same result while 
relaxing the requirement of providing an invariant beforehand.

\section{Data-driven Synthesis of Neural Supermartingales}\label{sec:cegis}

Our approach to synthesising neural supermartingales for quantitative verification utilises a counterexample-guided inductive synthesis (CEGIS) procedure \cite{solar-lezama2008ProgramSynthesisSketching, DBLP:conf/asplos/Solar-LezamaTBSS06} (cf.\ \cref{fig:CEGIS}). This procedure consists of two components,
a learner and a verifier, that work in opposition to each other. 
On the one hand, the learner seeks to synthesise a candidate supermartingale that meets the desired specification (cf.\ Eq.~\eqref{eq:spec}) over a finite set of samples from the state space, while simultaneously optimising the tightness of the probability bound. On the other hand, the verifier seeks to disprove the validity of this candidate by searching for counterexamples, i.e., instances where the desired specification is invalidated, over the entire state space. 
If the verifier shows that no such counterexample exists, then the desired specification is met by the supermartingale and the procedure provides a sound probability bound for the reachability probability of interest, together with a neural supermartingale to certify it.

\begin{figure}
    \centering
    \begin{tikzpicture}
\usetikzlibrary{arrows.meta}

\node (input) at (-2, 2) {};
\node (learner) [draw, rectangle, rounded corners, minimum width=1.5cm, minimum height =0.75cm]
	at (0,2) {Learner};

\node (verif) [draw, rectangle, rounded corners, minimum width=1.5cm, minimum height =0.75cm]
	at (3,2) {Verifier};

\node (out) [label={[align=left]$V, {p}$}] at (5,2) {} ;

\draw [-{Stealth[scale=1.5]}]  (input) edge node[below, xshift=-0.2cm] {$f$, {$S_0$}, $A$} (learner);

\draw [-{Stealth[scale=1.5]}, bend left] (verif) edge node[below] {$d_\text{cex}$} (learner);

\draw [-{Stealth[scale=1.5]}, bend left] (learner) edge node[above] {\normalsize Candidate $V$} (verif);

\draw[-{Stealth[scale=1.5]}] (verif) edge (out);

\end{tikzpicture}
  
    \caption{Overview of the counterexample-guided inductive synthesis procedure used to synthesise neural supermartingales for quantitative reachability verification. Inputs to the procedure are a probabilistic program $f$, a set of initial states {$S_0$}, and a target set $A$. The procedure outputs a valid neural supermartingale $V$ and a probability bound $p$. }
    \label{fig:CEGIS}
\end{figure}
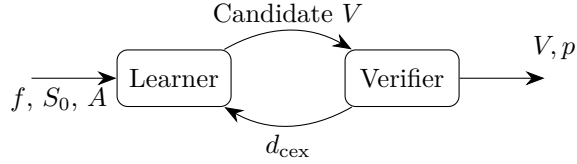

\subsection{Training of Neural Supermartingales from Samples}\label{sec:learner}

Our neural supermartingale for quantitative reachability verification consists of a neural network with $\relu$ activation functions, with an arbitrary number of hidden layers (cf.~\cref{sec:neuralism}). We train this neural network using gradient descent over a finite set $D = \{ d^{(1)}, \dots , d^{(m)}\} \subseteq \mathbb{R}^n$ of states $d^{(i)}$ sampled over the state space $\Real^n$. 
Initially, we sample uniformly within a bounded hyper-rectangle of $\mathbb{R}^n$, a technique that scales effectively to high dimensional state spaces and efficiently populates the initial dataset of state samples. Then, we construct a loss function that 
guides gradient descent to optimise the parameters (weight and biases) 
of the neural network $V$ to satisfy the specification set out in Eq.~\eqref{eq:spec} while also minimising the probability bound.
We define a loss function $\mathcal{L}(D)$ that consists of three terms:  
\begin{equation}
    \label{eq:loss}
    \mathcal{L}(D) = \beta_{1}\mathcal{L}_\indicating(D) + \beta_{2} \mathcal{L}_\nonincreasing(D) + \beta_3\mathcal{L}_{\bound}(D).
\end{equation}
Components $\mathcal{L}_\indicating$ and $\mathcal{L}_\nonincreasing$ are responsible for encouraging satisfaction of the conditions in Eq.~\eqref{eq:spec}, while the component $\mathcal{L}_{\bound}$ is responsible for tightening the probability bound. The parameters of this optimisation problem are the parameters of the neural network, which are initialised randomly. The dataset consists of the state samples, initially sampled randomly, and generated from counterexamples in subsequent CEGIS iterations (cf.\ \cref{sec:verification}). The coefficients $\beta_{1}, \beta_2$ and $\beta_3$ denote scale factors for each term, which we choose according to the priority that we want to assign to each condition (cf.\ \cref{sec:experiments}).  

First, consider the condition in Eq.~\eqref{eq:spec_ind}, which we refer to as the \textit{indicating condition}. For this, we use the following loss function:  
\begin{flalign}    
\label{eq:ind-loss}
     &\text{(indicating loss)} 
     &&\mathcal{L}_\indicating(D) = 
     \frac{1}{
     | D \cap A|
     }
     \sum_{d \in D \cap A}
     \relu{}(1 - V(d))
     &&\phantom{aaaaaaaa} 
\end{flalign}
This adds a penalty for states $d \in D$ lying inside the target condition $A$, at which $V$ fails to satisfy the indicating condition, whilst ignoring any states where $V$ satisfies it. We average this per-state penalty across all states in $D \cap A$ (where $| D \cap A|$ denotes the number of elements in this set).

We next consider the \textit{non-increasing condition} in Eq.~\eqref{eq:spec-dec}, for which we use the following loss term:
\begin{flalign}    
\label{eq:dec-loss}
     &\text{(non-increasing loss)}
     &&\mathcal{L}_\nonincreasing(D) = 
     \frac{1}{
     |D \setminus A|
     }
     \sum_{
     d \in D \setminus A
     }
     \ReLU(\mathbb{X}[V](d) - V(d)).
     &&\phantom{aaa} 
\end{flalign}
This penalises states $d \in D$ lying outside of the target set $A$, at which $V$ fails to satisfy the non-increasing condition.
Notably, this component is defined in terms of the post-expectation $\mathbb{X}[V]$ of our supermartingale. To embed this expression in 
our loss function we consider two alternative approaches, which we call \textit{program-aware} and \textit{program-agnostic}. 
\begin{description}
\item[Program-aware Approach] The program-aware approach uses the source code of the program to generate a symbolic expression for the post-expectation of $V$. For this purpose, we exploit the symbolic inference algorithm introduced by the tool PSI \cite{gehr2016psi, gehr2020lpsi}, along with a symbolic representation of $V$. We construct a probabilistic program which represents the evaluation of $V$ on the state resulting after the execution of the update function $f$. The expected value of this program's result is precisely $\mathbb{X}[V]$. This results in a symbolic expression that is a function of the program state and parameters of $V$. In the non-increasing loss, states are instantiated to elements of $D \setminus A$, while the parameters are left as free-variables that the gradient descent engine differentiates with respect to.

\item[Program-agnostic Approach] 
The program-agnostic approach provides an alternative formulation of the non-increasing loss term that does not require symbolic reasoning. Instead of directly leveraging the program's source code, it assumes that for any state we can draw independent samples from its distribution over successor states, i.e., we assume access to a  \textit{generative model} \cite{DBLP:journals/jmlr/MaillardMLG10,DBLP:conf/ijcai/KearnsMN99,DBLP:conf/icml/JinS21}. We therefore utilise a Monte Carlo scheme to estimate the post-expectation. 
For each state $d$ in our dataset $D$, to obtain an estimate of $\mathbb{X}[V](d)$ we sample a number $m^\prime$ of successor states $D^\prime = \{ {d^\prime}^{(k)}~:~1 \leq k \leq m^\prime\}$. Each successor state ${d^\prime} ^{(k)}$ is sampled by executing the program's update function $f$ (cf.\ Eq.~\eqref{eq:sys}) at state $d$. Then $\mathbb{X}[V](d)$ is estimated as 
$\mathbb{X}[V](d) \approx \frac{1}{|D^\prime|} \sum_{d^\prime \in D^\prime} V(d^\prime)$,
namely the average of $V$ over $D^\prime$.
Even though this is an approximation, we emphasise that this does not affect the soundness of our scheme, which is ensured by the verifier.
\end{description}

Finally, we introduce a tightness criterion that minimises the probability upper bound:
\begin{flalign}    
\label{eq:min-loss}
     &\text{(minimisation loss)}
     &&
     \mathcal{L}_\bound(D) = 
     \frac{1}{
     |D \cap S_0|
     }
     \sum_{d \in D \cap S_0}
     V(d).
     &&\phantom{aaa} 
\end{flalign}
This term encourages $V$ to take smaller values over the set of initial states $S_0$. Recall that the smaller the probability upper bound, the closer it is to the true value.
 
The loss function is provided to the gradient descent optimiser, whose performance benefits from a smooth objective. 
For this reason, to improve the performance of our learner we replace every ReLU activation within the neural network $V$ with a smooth approximation, Softplus, which takes the form $\softplus(s) = \log(1 + \exp(s))$, whereas the occurrences of $\relu{}$ applied to the output of $V$ in Eqs.\  \eqref{eq:ind-loss}, \eqref{eq:dec-loss} and \eqref{eq:dec-loss} are retained.
Additionally, we improve the approximation of Softplus to ReLU at small values over the interval $[0, 1]$ by re-scaling $V$. This means that we modify the indicating condition to require that $V(x) \geq \alpha$ over the target set $A$, for some large $\alpha > 1$. In other words, we modify the indicating loss to 
$
\mathcal{L}_\indicating(V)
= \frac{1}{|D \cap A|}
\sum_{d \in D \cap A}\relu( \alpha - V(d))$.
We remark that while $\softplus$ is used as the activation function in the learning stage, for the verification stage we instead employ $\relu$ activation functions, ensuring soundness of the generated neural supermartingale. We remark that this has no effect on the soundness of our approach, which is ultimately guaranteed by the verifier.

\subsection{Verification of Neural Supermartingales Using SMT Solvers}\label{sec:verification} 

The purpose of the verification stage is to check that the neural supermartingale meets the requirements of Eq.~\eqref{eq:spec} over the entire state space $\mathbb{R}^n$, and if that is determined to be the case to furthermore obtain a sound upper bound on the reachability probability.
We achieve this by constructing a suitable formula in first-order logic, Eq.~\eqref{eq:val-q}, and use SMT solving to decide its validity, or equivalently, to decide the satisfiability of its negation, Eq.~\eqref{eq:smt-spec}. 

The conditions pertaining to the validity of the neural supermartingale, given in \eqref{eq:spec_ind} and \eqref{eq:spec-dec}, are encoded by the formulas 
$\varphi_\indicating$ 
and $\varphi_\nonincreasing$. We also note that for a constant $p \in [0, 1]$ to be a sound upper bound on the reachability probability, it is sufficient to require that $p$ is an upper bound on the neural supermartingale's value over the set of initial states $S_0$ (cf.\ \cref{thm:V-of-s}), which is expressed by the formula $\varphi_\verifbound$. A suitable choice for the bound $p$ is determined by a binary search over the interval $[0, 1]$.
\begin{equation}
\label{eq:val-q}
    \forall s \in \Real^n: \underbrace{(s \in A \rightarrow V(s) \geq 1 )}_{\varphi_\indicating} \wedge \underbrace{(s \notin A \rightarrow \mathbb{X}[V](s) \leq V(s))}_{\varphi_\nonincreasing} 
    \wedge \underbrace{(s \in {S}_0 \rightarrow V(s) < p)}_{\varphi_\verifbound}.
\end{equation}
Here, $V(s)$ is a symbolic encoding of the candidate neural supermartingale proposed by the learner (\cref{sec:learner}), and $S_0$ and $A$ are defined by Boolean predicates over program variables, all of which (in our setting of networks composed from ReLU activations) are expressible using expressions and constraints in non-linear real arithmetic. For this reason, we use Z3 as our SMT solver \cite{z3-solver}.

The SMT solver is provided with the negation of Eq.~\eqref{eq:val-q}, namely 
\begin{equation}
    \label{eq:smt-spec}
        \exists s \in \Real^n: \underbrace{(s \in A \wedge V(s) < 1 )}_{\lnot \varphi_\indicating} \vee \underbrace{(s \notin A \wedge \mathbb{X}[V](s) > V(s))}_{\lnot \varphi_\nonincreasing} 
        \vee \underbrace{(s \in {S}_0 \wedge V(s) \geq p)}_{\lnot\varphi_\verifbound},
\end{equation}
and decides its satisfiability, seeking an assignment $d_\text{cex}$ of $s$ that is a counterexample to the neural supermartingale's validity, for which any of $\lnot\varphi_\indicating$, $\lnot\varphi_\nonincreasing$ and $\lnot\varphi_\verifbound$ are satisfied. If no counterexample is found, this certifies the validity of the neural supermartingale. 
Alternatively, if a counterexample $d_\text{cex}$ is found, it is added to the data set $D$, for the synthesis to incrementally resume.

\section{Experimental Evaluation}\label{sec:experiments}

\newcommand{\bmname}[1]{{\ttfamily #1}}
\begin{table}[t]
    \caption{Results comparing neural supermartingales with Farkas Lemma for different benchmarks. Here, $p$ is the average probability bound generated by the certificate; success ratio is the number of successful experiments, out of 10 repeats, generated by CEGIS with neural supermartingale; `-' means no result was obtained. We also denote the architecture of the network: $(h_1, h_2)$ denotes a network with 2 hidden layers consisting of $h_1$ and $h_2$ neurons respectively. }
    \label{tab:results}
    \centering
    \scalebox{0.86}{\begin{tabular}{lcccccc}\toprule
        \textbf{Benchmark}&  \textbf{Farkas'} & \multicolumn{4}{c}{\textbf{Quantitative Neural Certificates}} & \textbf{Network} \\ \cmidrule(lr){3-6} 
        & \textbf{Lemma} & \multicolumn{2}{c}{Program-Agnostic} & \multicolumn{2}{c}{Program-Aware} &
        \textbf{Arch.}\\  \cmidrule(lr){3-4} \cmidrule(lr){5-6} 
        & & $p$ & Success Ratio &  $p$ & Success Ratio \\ 
        \midrule
        \bmname{persist\_2d}        & -             & $\leq 0.1026$ & 0.9 & $\leq 0.1175$ & 0.9  & (3, 1) \\
        \bmname{faulty\_marbles}    & -             & $\leq 0.0739$ & 0.9 & $\leq 0.0649$ & 0.8 & 3 \\
        \bmname{faulty\_unreliable} & -             & $\leq 0.0553$ & 0.9 & $\leq 0.0536$ & 1.0 & 3 \\
        \bmname{faulty\_regions}    & -             & $\leq 0.0473$ & 0.9 & $\leq 0.0411$ & 0.9 & (3, 1) \\\midrule
        \bmname{cliff\_crossing}    & $\leq 0.4546$ & $\leq 0.0553$ & 0.9 & $\leq 0.0591$ & 0.8 & 4 \\
        \bmname{repulse100}            & $\leq 0.0991$ & $\leq 0.0288$ & 1.0 & $\leq 0.0268$ & 1.0 & 3 \\
        \bmname{repulse100\_uniform}   & $\leq 0.0991$ & $\leq 0.0344$ & 1.0 & -             & -   & 2 \\
        \bmname{repulse100\_2d}        & $\leq 0.0991$ & $\leq 0.0568$ & 1.0 & $\leq 0.0541$ & 1.0 & 3 \\
        \bmname{faulty\_varying}    & $\leq 0.1819$ & $\leq 0.0864$ & 1.0 & $\leq 0.0865$ & 1.0 & 2 \\
        \bmname{faulty\_concave}    & $\leq 0.1819$ & $\leq 0.1399$ & 1.0 & $\leq 0.1356$ & 0.9 & (3, 1) \\ \midrule
        \bmname{fixed\_loop}        & $\leq 0.0091$ & $\leq 0.0095$ & 1.0 & $\leq 0.0094$ & 1.0 & 1 \\ 
        \bmname{faulty\_loop}       & $\leq 0.0181$ & $\leq 0.0195$ & 1.0 & $\leq 0.0184$ & 1.0 & 1 \\ 
        \bmname{faulty\_uniform}    & $\leq 0.0181$ & $\leq 0.0233$ & 1.0 & $\leq 0.0221$ & 1.0 & 1 \\ 
        \bmname{faulty\_rare}       & $\leq 0.0019$ & $\leq 0.0022$ & 1.0 & $\leq 0.0022$ & 1.0 & 1 \\ 
        \bmname{faulty\_easy1}      & $\leq 0.0801$ & $\leq 0.1007$ & 1.0 & $\leq 0.0865$ & 1.0 & 1 \\ 
        \bmname{faulty\_ndecr}      & $\leq 0.0561$ & $\leq 0.0723$ & 1.0 & $\leq 0.0630$ & 1.0 & 1 \\
        \bmname{faulty\_walk}       & $\leq 0.0121$ & $\leq 0.0173$ & 1.0 & $\leq 0.0166$ & 1.0 & 1 \\ 
        \bottomrule
    \end{tabular}}
\end{table}

\begin{figure}
    \centering
\begin{tikzpicture}

\definecolor{crimson2143940}{RGB}{214,39,40}
\definecolor{darkgray176}{RGB}{176,176,176}
\definecolor{lightgray204}{RGB}{204,204,204}

\begin{axis}[
legend cell align={left},
legend style={
  fill opacity=0.8,
  draw opacity=1,
  text opacity=1,
  at={(0.98,0.02)},
  anchor=south east,
  draw=lightgray204
},
log basis x={10},
log basis y={10},
tick align=outside,
tick pos=left,
x grid style={darkgray176},
xlabel style={align=center},
xlabel={Probability bound $p$\\Program-Agnostic Neural Supermartingale},
xmin=0.001, xmax=1.2,
xmode=log,
xtick style={color=black},
y grid style={darkgray176},
ylabel style={align=center},
ylabel={Probability bound $p$\\Farkas' Lemma},
ymin=0.001, ymax=1.2,
ymode=log,
ytick style={color=black},
width=2.7in,
height=2.7in
]
\addplot [semithick, crimson2143940, mark=*, mark size=3, mark options={solid}, only marks]
table {%
0.0842 1
0.0739 1
0.0553 1
0.0313 1
};
\addlegendentry{Farkas Fails}
\addplot [semithick, crimson2143940, mark=x, mark size=3, mark options={solid}, only marks]
table {%
0.0288 0.0991
0.0344 0.0991
0.0568 0.0991
0.0864 0.1819
0.1399 0.1819
0.0095 0.0091
0.0195 0.0181
0.0233 0.0181
0.0022 0.0019
0.1007 0.0801
0.0723 0.0561
0.0173 0.0121
0.0553 0.4546
};
\addlegendentry{Farkas Succeeds}
\addplot [semithick, black, dashed]
table {%
0.0001 0.0001
1 1
};
\end{axis}

\end{tikzpicture}
\caption{Probability bounds generated using program-agnostic neural supermartingales and using Farkas' Lemma. The reference line $y=x$ allows one to see which approach outperforms the other and by how much: above the line means that neural supermartingales outperform Farkas' Lemma, below the line the opposite. Neural supermartingales can significantly outperform linear templates when a better bound exists, but otherwise achieve similar results. Our approach with program-aware neural supermartingales provides even better outcomes, compared with Farkas' Lemma. 
    }
    \label{fig:farkas_vs_ism}
\end{figure}
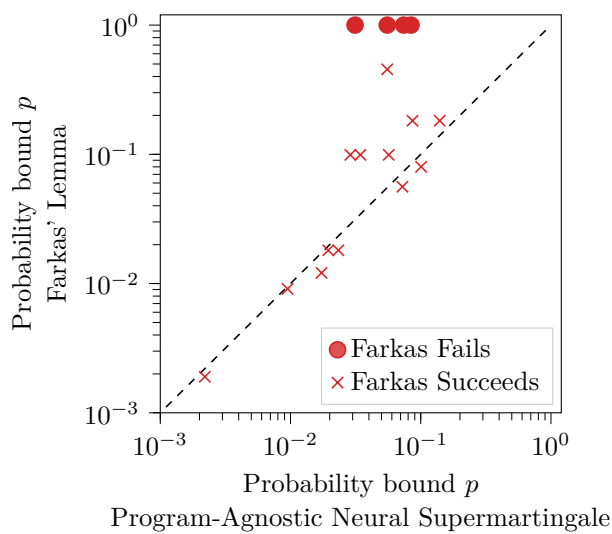

The previous section has presented a new method for synthesising neural supermartingales. This section presents an empirical evaluation of the method, by testing it against a series of benchmarks. Each benchmark is run ten times. A test is successful if a valid supermartingale is synthesised, and the proportion of successful tests is recorded. Further, the average bound from the valid supermartingales is also recorded, along with the average time taken by the learning and verification steps, respectively. This procedure is applied separately for program-aware and program-agnostic synthesis. 
We use values $\beta_1 = 10, \beta_2 = 10^5, \beta_3 = 1$ in Eq.~\eqref{eq:loss} to define the priority ordering of the three terms in the loss function, which we find beneficial across our set of benchmarks. 
To compare our method against existing work, we perform template-based synthesis of linear supermartingales using Farkas' Lemma. This requires deterministic invariants to overapproximate the reachable set of states, which may either be generated by abstract interpretation, or provided manually \cite{DBLP:journals/toplas/TakisakaOUH21}. In our experiments involving the Farkas' Lemma, we provide a suitable invariant manually based on the guard of the loop, in some cases strengthening them with additional constraints by an educated guess. 

It should be noted that our method is inherently stochastic. One reason is the random initialisation of the neural template's parameters in the learning phase. In program-agnostic synthesis, an additional source of randomness is the sampling of successor states. So that the results accurately reflect the performance of our method, the random seed for these sources of randomness is selected differently for each test. An additional source of non-determinism arises from the SMT solver Z3 as it generates counterexamples: this cannot be controlled externally. Benchmarks are run on a machine with an Nvidia A40 GPU, and involve the assertion-violation analysis of programs created using the following two patterns:

\begin{description}
\item[Unreliable Hardware] These are programs that execute on  unreliable hardware. The goal is to upper bound the probability that the program fails to terminate due to a hardware fault. A simple example is \texttt{faulty\_loop}, whose source code is presented in \Cref{lst:faulty}, and which consists of a loop which may violate an assertion with small probability, modelling a hardware fault.
\item[Robot Motion] These programs model an agent (e.g., a robot) that moves within a physical environment. In these benchmarks, the uncertainty in control and sensing is modelled probabilistically. The environment contains a target region and a hazardous region. The goal is to upper bound the probability that the robot enters the hazardous region. We provide the program \texttt{repulse100} as an example (\cref{lst:repulse}), which is a variant of {\tt repulse} (\cref{fig:Intuition}) containing a modified assertion and initial state. The program models the motion of a robot in a one-dimensional environment, starting at ${\tt x} = 10$. The target region is where ${\tt x} < 0$, and hazardous region is where ${\tt x} > 100$. As with {\tt repulse}, in each iteration there is an equal probability of ${\tt x}$ being decremented by 2, and $x$ being incremented by 1.
\end{description}
Several of the programs are based on benchmarks used in prior work focusing on other types of supermartingales \cite{mdranking,DBLP:conf/cav/ChakarovS13}. Additionally, there are several benchmarks that are entirely new.

\begin{table}
    \caption{Results showing the time taken in seconds to synthesise supermartingales by our method and Farkas' Lemma. For our method, we show the time taken during learning and verification.}
    \label{tab:timings-main}
    \centering
    \scalebox{0.88}{\begin{tabular}{lccccc}\toprule
        \textbf{Benchmark}&  \textbf{Farkas'} & \multicolumn{4}{c}{\textbf{Quantitative Neural Certificates}} \\ \cmidrule(lr){3-6} 
        & \textbf{Lemma} & \multicolumn{2}{c}{Program-Agnostic} & \multicolumn{2}{c}{Program-Aware} \\  \cmidrule(lr){3-4} \cmidrule(lr){5-6} 
        & & Learn Time & Verify Time & Learn Time & Verify Time \\ 
        \midrule
        \bmname{persist\_2d}        & -      & 169.14 & 85.31 & 44.96 & 74.90 \\
        \bmname{faulty\_marbles}    & -             & 114.24 &  29.23 & 15.86 &  28.68 \\
        \bmname{faulty\_unreliable} & -             & 123.85 & 45.48 & 18.34 &  33.97 \\
        \bmname{faulty\_regions}    & -             & 17.92 & 35.85 & 17.55 & 32.38 \\\midrule
        \bmname{cliff\_crossing}    & 0.11      & 134.61 & 19.02 & 21.27 & 29.07 \\
        \bmname{repulse100}            & 0.19  & 16.65  & 5.00  & 6.49  & 3.74 \\
        \bmname{repulse100\_uniform}   & 0.19  & 21.28  & 14.18 & -     & - \\
        \bmname{repulse100\_2d}        & 0.12 & 122.92 & 64.54 & 15.75 & 47.70 \\
        \bmname{faulty\_varying}    & 0.36  & 21.74  & 5.06  & 4.71  & 3.28  \\
        \bmname{faulty\_concave}    & 0.39  & 49.12  & 13.37 & 13.49 & 7.82  \\ \midrule
        \bmname{fixed\_loop}        & 0.15  & 14.16  & 3.14  & 3.34  & 2.43  \\ 
        \bmname{faulty\_loop}       & 0.16  & 25.52  & 3.81  & 3.73  & 2.66  \\ 
        \bmname{faulty\_uniform}    & 0.34  & 20.20  & 1.91  & 6.75  & 1.33  \\ 
        \bmname{faulty\_rare}       & 0.27  & 25.52  & 4.27  & 3.71  & 2.96  \\ 
        \bmname{faulty\_easy1}      & 0.31  & 104.20 & 12.78 & 4.95  & 7.51  \\ 
        \bmname{faulty\_ndecr}      & 0.33  & 104.89 & 9.06  & 5.37  & 4.66  \\
        \bmname{faulty\_walk}       & 0.32  & 15.08  & 4.00  & 6.97  & 3.33  \\ 
        \bottomrule
    \end{tabular}}
\end{table}

The results are reported in Table~\ref{tab:results}. The table is divided into three sections. The first section shows the benchmarks where Farkas' Lemma cannot be applied, and where only our method is capable of producing a bound. The second section shows examples where both methods are able to produce a bound, but our method produces a notably better bound. The third section shows benchmarks where both methods produce comparable bounds.

Dashes in the table indicate experiments where a valid supermartingale could not be obtained. In the case of Farkas' Lemma, there are several cases where there is no linear supermartingale for the benchmark. By contrast, program-agnostic and program-aware synthesis could be applied to almost all benchmarks. The one exception is \texttt{repulse100\_uniform} where only program-aware synthesis was unsuccessful: this is due to indicator functions in the post-expectation, which are not smooth and posed a problem for the optimiser. This benchmark underscores the value of program-agnostic synthesis, since it does not require embedding the explicit post-expectation in the loss function. 

The first section of the benchmarks in \cref{tab:results} demonstrates that our method produces useful results on programs that are out-of-scope for existing techniques. Furthermore, the success ratio of our method is high on all the benchmarks, which indicates its robustness. For the second and third sections (which consist of benchmarks to which Farkas' Lemma is applicable), the success ratio of our method is broadly maximal, which is to be expected, since these programs can also be solved by Farkas' Lemma. 

In the second section of \cref{tab:results}, we find more complex benchmarks where our method was able to significantly improve the bound from Farkas' Lemma. The smallest improvement was about 0.04, and the largest improvement was over 0.39. The intuition here is that neural templates allow more sophisticated supermartingales to be learnt, that can approximate how the reachability probability varies across the state space better than linear templates, and thereby yield tighter probability bounds. 

The third section of \cref{tab:results} consists of relatively simple benchmarks, where our method produces results that are marginally less tight in comparison to Farkas' Lemma. This is not surprising since our method uses neural networks consisting of a single neuron for these examples, owing to their simplicity. The expressive power of these networks is therefore similar to linear templates. 

In summary, the results show that our method does significantly better on more complex examples, and marginally worse on very simple examples.
This is highlighted in \Cref{fig:farkas_vs_ism}. Each point represents a benchmark. The position on the $x$-axis shows the probability bound obtained by our program-agnostic method, and the $y$-axis shows the probability bound obtained by Farkas' Lemma. Points above the line indicate benchmarks where neural supermartingales outperform Farkas' Lemma, and vice versa. The scale is logarithmic to emphasise order-of-magnitude differences.

Notice that in \cref{tab:results} the program-aware algorithm usually yields better bounds than the program-agnostic algorithm, but the improvement is mostly marginal. This is in fact a strength of our method: our data-driven approach performs almost as well as one dependent on symbolic representations, which is promising in light of questions of scalability to more complex programs. We also include a breakdown of computation time (\Cref{tab:timings-main}) 
which allows distinguishing between learning and verification overheads. Notably, Farkas' Lemma is significantly faster than our method, given that it relies on solving a convex optimisation problem via linear programming, whereas the synthesis of neural supermartingales is a non-convex optimisation problem that is addressed using gradient descent.

\begin{figure}[t!]
\begin{minipage}[t]{0.48\linewidth}
\begin{lstlisting}[caption={{\tt faulty\_loop}}, label={lst:faulty}]
error = 0;
i = 0;
while i < 10 do
    assert(error == 0);
    p ~ Bernoulli(0.999);
    if p == 1 then
        q ~ Bernoulli(0.5);
        if q == 1 then
            i += 1
        fi
    else
        error = 1
    fi
od
\end{lstlisting}
\end{minipage}
\begin{minipage}[t]{0.48\linewidth}
\begin{lstlisting}[caption={{\tt repulse100} }, label={lst:repulse}]
x = 10;
while x >= 0 do
    assert(x <= 100);
    p ~ Bernoulli(0.5);
    if p == 1 then
        x -= 2
    else
        x += 1
    fi
od
\end{lstlisting}
\end{minipage}

\begin{minipage}[t]{0.48\linewidth}
\begin{lstlisting}[caption={{\tt cliff\_crossing} },label={lst:cliff}]
x = 0;
y = 0;
while x <= 50 do
    assert(y <= 10);
    p ~ Bernoulli(0.9);
    x += 1;
    if p == 0 then
        y += 1
    else
        if y >= 1 then
            y -= 1
        fi
    fi
od
\end{lstlisting}
\end{minipage}
\begin{minipage}[t]{0.48\linewidth}
\begin{lstlisting}[caption={{\tt faulty\_concave} },label={lst:concave}]
i = 1;
error = 0;
while i < 10 do
    assert(error == 0);
    if i < 4 then 
        p ~ Bernoulli(0.999)
    else
        p ~ Bernoulli(0.99)
    fi;
    if p == 1 then
        q ~ Bernoulli(0.5);
        if q == 1 then
            i += 1
        fi
    else
        error = 1
    fi
od
\end{lstlisting}
\end{minipage}
\end{figure}

Having presented the experimental results, we shall further comment on some specific benchmarks.
The {\tt repulse100} program (\cref{lst:repulse}) is a variation of {\tt repulse} presented in \cref{fig:Intuition}, but with the assertion changed to {\tt assert(x <= 100)}. While this is a small program, our method is still able to produce a significantly better result than Farkas' lemma, using a neural supermartingale with a single hidden layer consisting of three ReLU components that are summed together, which allows a convex piecewise linear function to be learnt. 

The {\tt cliff\_crossing} program (\cref{lst:cliff}) is a further benchmark for which our method is capable of producing a significantly better bound. 
\texttt{cliff\_crossing} (\cref{lst:cliff}) models a robot moving along a road (defined by the region $(x,y) \in [0, 50] \times [0, 10]$) that is adjacent to a cliff (defined by the region with $y > 10$). 
This is a 2 dimensional benchmark, for which we use a neural supermartingale that consists of two input neurons and four ReLU components, leading to a clear improvement compared to the linear supermartingale in the tightness of the probability bounds generated, as illustrated by \cref{fig:cliff-main}.

Both {\tt repulse100} and {\tt cliff\_crossing} are benchmarks that use neural supermartingales with a single hidden layer. An example that uses two hidden layers is {\tt faulty\_concave} (\cref{lst:concave}),
which simulates a program executing on unreliable hardware, where the probability of a fault varies across the state space.
This results in two distinct regions of the state space, one of which has a significantly higher probability of assertion violation than the other. We find that neither a linear template nor a single-layer neural supermartingale is able to exploit this conditional behaviour, each of which yield an overly conservative certificate, but that a neural supermartingale with two hidden layers is able to more tightly approximate the reachability probability in each of the two regions.

\begin{figure}[t]
     \begin{subfigure}[b]{0.49\textwidth}
        \centering
        \begin{tikzpicture}
        \begin{axis}[mesh/ordering=y varies,colormap/greenyellow,width=2.9in]
        	\addplot3 [surf,mesh/rows=11] file {graphs/cliff_crossing_linear.dat};
        \end{axis}
        \end{tikzpicture}
        \caption*{Linear supermartingale.}
        \label{fig:cliff-lin-main}
    \end{subfigure}
    \hfill
    \begin{subfigure}[b]{0.49\textwidth}
        \centering
        \begin{tikzpicture}
        \begin{axis}[mesh/ordering=y varies,colormap/cool,width=2.9in]
	    \addplot3 [surf,mesh/rows=11] file {graphs/cliff_crossing_neural.dat};
        \end{axis}
        \end{tikzpicture}
        \caption*{Neural supermartingale.}
        \label{fig:cliff-nueral-main}
    \end{subfigure}
    \caption{Supermartingale functions for the \texttt{cliff\_crossing} benchmark as generated using Farkas' Lemma (on the left) and using neural supermartingales (on the right). The right hand figure illustrates the tighter bounds obtainable through the use of neural templates.
    }
    \label{fig:cliff-main}
\end{figure}
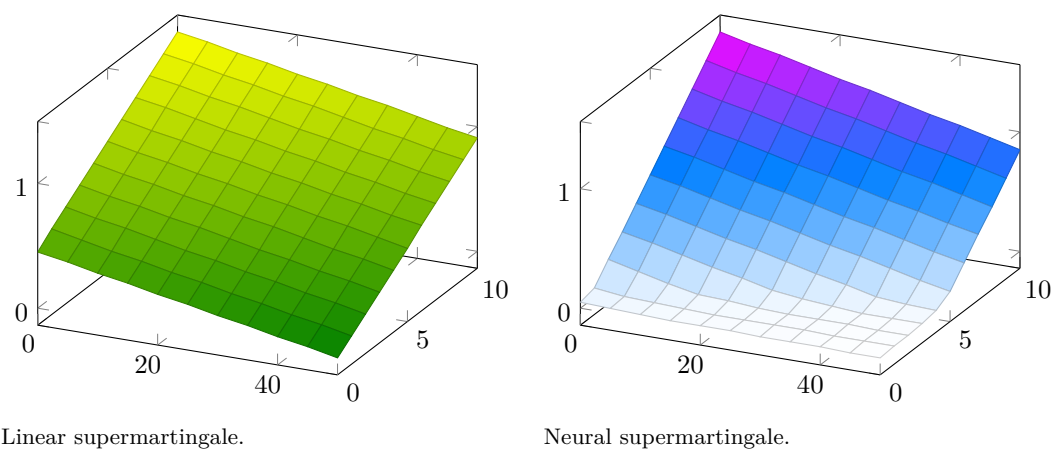

\section{Related Work}

The formal verification of probabilistic programs using supermartingales is a well-studied topic. Early approaches to introduce this technique applied them to almost-sure termination analysis of probabilistic programs \cite{DBLP:conf/cav/ChakarovS13}, which allowed several extensions to polynomial programs, programs with non-determinism,
lexicographic and modular termination arguments, and
persistence properties~\cite{AgrawalC018,DBLP:conf/cav/ChatterjeeFG16,DBLP:journals/toplas/ChatterjeeFNH18,DBLP:conf/popl/ChatterjeeNZ17,FuC19,Huang0CG19}. All these methods relied on symbolic reasoning algorithms for synthesising supermartingales, that leveraged theories based on Farkas' lemma for the synthesis of linear certificates, and Putinar's Positivestellensatz and sum-of-square methods for the synthesis of polynomial certificates~\cite{DBLP:conf/atva/ChatterjeeGGKSSZ25}. While these methods are the state-of-the-art for many existing problem instances in literature, to achieve the strong guarantees that they provide (such as completeness for the specific class of programs they target), they must necessarily introduce restrictions on the class of programs to which they are applicable, and the form of certificates that they derive. 
Moreover, symbolic methods need externally provided invariants that are stronger than $\mathbb{R}^n$ to enforce non-negativity in the case of linear certificates, as we illustrate in \cref{fig:Intuition}. Also, symbolic methods for the synthesis of polynomial certificates require compact deterministic invariants to operate. 

The use of neural networks to represent certificates has allowed many of these restrictions to be lifted. In the context of the analysis of probabilistic programs, neural networks were first applied to certify positive almost-sure termination \cite{DBLP:conf/cav/AbateGR20}. This approach lent itself to a wider range of formal verification questions for stochastic dynamical models, from stability and safety analysis to controller synthesis~\cite{LSAZ22,DBLP:conf/aaai/LechnerZCH22,DBLP:conf/tacas/ChatterjeeHLZ23,DBLP:journals/csysl/MathiesenCL23,DBLP:conf/aaai/NadaliM0024,DBLP:conf/aaai/NeustroevGL25}. These data-driven inductive synthesis techniques for supermartingales have also been extended to machine learning techniques other then deep learning, such as piecewise linear regression and decision tree learning~\cite{DBLP:conf/tacas/BatzCJKKM23,DBLP:conf/cav/BaoTPHR22}. Data-driven techniques for formal verification have also been applied to the construction of system abstractions, reducing the verification of the original system to a simpler yet equivalent verification task~\cite{DBLP:conf/cav/AbateGS24,DBLP:conf/nips/AbateEG22,DBLP:conf/cav/AbateGMS25,DBLP:journals/corr/abs-2505-15497}. 

The present work focuses on verifying the system under analysis through the construction of proof certificates. In particular, our work extends the data-driven synthesis of neural supermartingale certificates to quantitative verification questions. The correctness of our approach builds upon the theory of non-negative repulsing supermartingales \cite{DBLP:journals/toplas/TakisakaOUH21}, as formulated in \cref{thm:V-of-s}. 
Our experiments have demonstrated that neural certificates attain comparable results on programs that are amenable to symbolic analysis (such as those in the third section of \cref{tab:results}), while surpassing symbolic methods on more complex programs that are either out-of-scope or yield overly conservative bounds when existing techniques are applied (such as those in the first and second section of \cref{tab:results}).

\section{Conclusion}

We have presented a data-driven framework for the quantitative verification of probabilistic models 
that leverage neural networks to represent supermartingale certificates. 
Our experiments have shown that neural certificates are applicable to a wider range of probabilistic models than was previously possible using purely symbolic techniques. We also illustrate that on existing models our method yields certificates of better or comparable quality than those produced by symbolic techniques for the synthesis of linear supermartingales. This builds upon the ability of neural networks to approximate non-linear functions, while satisfying the constraints imposed by \cref{thm:V-of-s} without the need for supporting deterministic invariants to be provided externally. Our method applies to quantitative termination and assertion-violation analysis for probabilistic programs, as well as safety and invariant verification for stochastic dynamic models. We imagine extensions to further quantitative verification questions, such as temporal properties beyond reachability \cite{DBLP:books/daglib/0020348,DBLP:conf/cav/AbateGR24,neuralmc,DBLP:conf/cav/AbateGR25,DBLP:conf/nips/GiacobbeKPT25}, and bounding expected accrued costs \cite{ngoBoundedExpectationsResource2018, wangCostAnalysisNondeterministic2019b, wangCentralMomentAnalysis2021}. While neural networks of no greater than two layers suffice for our examples, it also remains open to investigate whether deeper network architectures are beneficial for quantitative verification, as well as their effect on the tightness of bounds obtained and scalability of the approach.

\bibliographystyle{alphaurl} 
\bibliography{main}

@article{c,
    title = "Probabilistic reachability and safety for controlled discrete time stochastic hybrid systems",
    journal = "Automatica",
    volume = "44",
    number = "11",
    pages = "2724 - 2734",
    year = "2008",
    author = "Alessandro Abate and Maria Prandini and John Lygeros and Shankar Sastry"
}

@book{revuz1975markov,
  title={Markov Chains},
  author={Revuz, D.},
  isbn={9780444107527},
  lccn={74080112},
  series={Mathematical Studies},
  year={1975},
  publisher={North-Holland Publishing Company}
}

@book{pollard_2001, 
    place={Cambridge}, 
    series={Cambridge Series in Statistical and Probabilistic Mathematics}, title={A User's Guide to Measure Theoretic Probability}, publisher={Cambridge University Press}, author={Pollard, David}, year={2001}, collection={Cambridge Series in Statistical and Probabilistic Mathematics}}

@book{meyn_tweedie_glynn_2009, 
    place={Cambridge}, 
    edition={2}, 
    series={Cambridge Mathematical Library}, title={Markov Chains and Stochastic Stability}, 
    publisher={Cambridge University Press},
    author={Meyn, Sean and Tweedie, Richard L. and Glynn, Peter W.}, 
    year={2009}, collection={Cambridge Mathematical Library}
}

@book{Durrett2010,
  author       = {Rick Durrett},
  title        = {Probability: Theory and Examples, 4th Edition},
  publisher    = {Cambridge University Press},
  year         = {2010}
}

@inproceedings{DBLP:conf/popl/ChatterjeeNZ17,
  author    = {Krishnendu Chatterjee and
               Petr Novotn{\'{y}} and
               \DJ{}or\dj{}e \v{Z}ikeli\'{c}},
  title     = {Stochastic invariants for probabilistic termination},
  booktitle = {{POPL}},
  pages     = {145--160},
  publisher = {{ACM}},
  year      = {2017}
}

@book{davey_priestley_2002, place={Cambridge}, edition={2}, title={Introduction to Lattices and Order}, DOI={10.1017/CBO9780511809088}, publisher={Cambridge University Press}, author={Davey, B. A. and Priestley, H. A.}, year={2002}}

@book{royden2010real,
  title={Real Analysis},
  author={Royden, H.L. and Fitzpatrick, P.},
  isbn={9780131437470},
  lccn={2009048692},
  year={2010},
  edition={4},
  publisher={Prentice Hall}
}

@book{axler2019measure,
  title={Measure, Integration \& Real Analysis},
  author={Axler, S.},
  isbn={9783030331436},
  series={Graduate Texts in Mathematics},
  year={2019},
  publisher={Springer International Publishing}
}

@article{Huang0CG19,
  author    = {Mingzhang Huang and
               Hongfei Fu and
               Krishnendu Chatterjee and
               Amir Kafshdar Goharshady},
  title     = {Modular verification for almost-sure termination of probabilistic
               programs},
  journal   = {Proc. {ACM} Program. Lang.},
  volume    = {3},
  number    = {{OOPSLA}},
  pages     = {129:1--129:29},
  year      = {2019}
}

@inproceedings{FuC19,
  author    = {Hongfei Fu and
               Krishnendu Chatterjee},
  title     = {Termination of Nondeterministic Probabilistic Programs},
  booktitle = {{VMCAI}},
  series    = {LNCS},
  volume    = {11388},
  pages     = {468--490},
  publisher = {Springer},
  year      = {2019}
}

@book{BS96,
	author = {Bertsekas, D. P. and Shreve , S. E.},
	publisher = {Athena Scientific},
	title = {Stochastic optimal control: {T}he discrete-time case},
	year = {1996}}

@book{hll1996,
	author = {Hern{\'a}ndez-Lerma, O. and Lasserre, J. B.},
	pages = {xiv+216},
	publisher = {Springer-Verlag},
	series = {Applications of Mathematics},
	title = {Discrete-time {M}arkov control processes},
	volume = {30},
	year = {1996}}

@book{kallenberg,
	author = {Kallenberg, O.},
	publisher = {Springer Science \& Business Media},
	title = {Foundations of modern probability},
	year = {2006}}

@inproceedings{DBLP:conf/asplos/Solar-LezamaTBSS06,
  author    = {Armando Solar{-}Lezama and
               Liviu Tancau and
               Rastislav Bod{\'{\i}}k and
               Sanjit A. Seshia and
               Vijay A. Saraswat},
  title     = {Combinatorial sketching for finite programs},
  booktitle = {{ASPLOS}},
  pages     = {404--415},
  publisher = {{ACM}},
  year      = {2006}
}

@phdthesis{solar-lezama2008ProgramSynthesisSketching,
  title = {Program Synthesis by Sketching},
  author = {{Solar-Lezama}, Armando},
  year = {2008},
  address = {{USA}},
  advisor = {Bodik, Rastislav},
  isbn = {9781109097450},
  school = {University of California at Berkeley}
}

@inproceedings{z3-solver,
  author       = {Leonardo Mendon{\c{c}}a de Moura and
                  Nikolaj S. Bj{\o}rner},
  title        = {{Z3:} An Efficient {SMT} Solver},
  booktitle    = {{TACAS}},
  series       = {Lecture Notes in Computer Science},
  volume       = {4963},
  pages        = {337--340},
  publisher    = {Springer},
  year         = {2008}
}

@inproceedings{gehr2016psi,
  author       = {Timon Gehr and
                  Sasa Misailovic and
                  Martin T. Vechev},
  title        = {{PSI:} Exact Symbolic Inference for Probabilistic Programs},
  booktitle    = {{CAV} {(1)}},
  series       = {Lecture Notes in Computer Science},
  volume       = {9779},
  pages        = {62--83},
  publisher    = {Springer},
  year         = {2016}
}

@inproceedings{gehr2020lpsi,
  author       = {Timon Gehr and
                  Samuel Steffen and
                  Martin T. Vechev},
  title        = {{\(\lambda\)PSI}: exact inference for higher-order probabilistic programs},
  booktitle    = {{PLDI}},
  pages        = {883--897},
  publisher    = {{ACM}},
  year         = {2020}
}

@book{DBLP:books/daglib/0020348,
  author    = {Christel Baier and
               Joost{-}Pieter Katoen},
  title     = {Principles of model checking},
  publisher = {{MIT} Press},
  year      = {2008}
}

@article{DBLP:journals/toplas/TakisakaOUH21,
  author    = {Toru Takisaka and
               Yuichiro Oyabu and
               Natsuki Urabe and
               Ichiro Hasuo},
  title     = {Ranking and Repulsing Supermartingales for Reachability in Randomized
               Programs},
  journal   = {{ACM} Trans. Program. Lang. Syst.},
  volume    = {43},
  number    = {2},
  pages     = {5:1--5:46},
  year      = {2021}
}

@inproceedings{DBLP:conf/atva/TakisakaOUH18,
  author    = {Toru Takisaka and
               Yuichiro Oyabu and
               Natsuki Urabe and
               Ichiro Hasuo},
  title     = {Ranking and Repulsing Supermartingales for Reachability in Probabilistic
               Programs},
  booktitle = {{ATVA}},
  series    = {Lecture Notes in Computer Science},
  volume    = {11138},
  pages     = {476--493},
  publisher = {Springer},
  year      = {2018}
}

@inproceedings{DBLP:conf/cav/ChatterjeeGMZ22,
  author    = {Krishnendu Chatterjee and
               Amir Kafshdar Goharshady and
               Tobias Meggendorfer and
               \DJ{}or\dj{}e \v{Z}ikeli\'{c}},
  title     = {Sound and Complete Certificates for Quantitative Termination Analysis
               of Probabilistic Programs},
  booktitle = {{CAV} {(1)}},
  series    = {Lecture Notes in Computer Science},
  volume    = {13371},
  pages     = {55--78},
  publisher = {Springer},
  year      = {2022}
}

@book{DBLP:books/daglib/0070910,
  author       = {Glynn Winskel},
  title        = {The formal semantics of programming languages - an introduction},
  series       = {Foundation of computing series},
  publisher    = {{MIT} Press},
  year         = {1993}
}

@article{DBLP:journals/csysl/AbateAGP21,
  author    = {Alessandro Abate and
               Daniele Ahmed and
               Mirco Giacobbe and
               Andrea Peruffo},
  title     = {Formal Synthesis of {Lyapunov Neural Networks}},
  journal   = {{IEEE} Control. Syst. Lett.},
  volume    = {5},
  number    = {3},
  pages     = {773--778},
  year      = {2021}
}

@inproceedings{giacobbe2022,
  author    = {Mirco Giacobbe and
               Daniel Kroening and
               Julian Parsert},
  title     = {Neural termination analysis},
  booktitle = {{ESEC/SIGSOFT} {FSE}},
  pages     = {633--645},
  publisher = {{ACM}},
  year      = {2022}
}

@inproceedings{DBLP:conf/aaai/LechnerZCH22,
  author    = {Mathias Lechner and
               \DJ{}or\dj{}e \v{Z}ikeli\'{c} and
               Krishnendu Chatterjee and
               Thomas A. Henzinger},
  title     = {Stability Verification in Stochastic Control Systems via Neural Network
               Supermartingales},
  booktitle = {{AAAI}},
  pages     = {7326--7336},
  publisher = {{AAAI} Press},
  year      = {2022}
}

@inproceedings{DBLP:conf/nips/ChangRG19,
  author    = {Ya{-}Chien Chang and
               Nima Roohi and
               Sicun Gao},
  title     = {{Neural Lyapunov Control}},
  booktitle = {NeurIPS},
  pages     = {3240--3249},
  year      = {2019}
}

@inproceedings{DBLP:conf/cav/BaoTPHR22,
  author       = {Jialu Bao and
                  Nitesh Trivedi and
                  Drashti Pathak and
                  Justin Hsu and
                  Subhajit Roy},
  title        = {Data-Driven Invariant Learning for Probabilistic Programs},
  booktitle    = {{CAV} {(1)}},
  series       = {Lecture Notes in Computer Science},
  volume       = {13371},
  pages        = {33--54},
  publisher    = {Springer},
  year         = {2022}
}

@inproceedings{ngoBoundedExpectationsResource2018,
  author       = {Van Chan Ngo and
                  Quentin Carbonneaux and
                  Jan Hoffmann},
  title        = {Bounded expectations: resource analysis for probabilistic programs},
  booktitle    = {{PLDI}},
  pages        = {496--512},
  publisher    = {{ACM}},
  year         = {2018}
}

@inproceedings{wangCostAnalysisNondeterministic2019b,
  author       = {Peixin Wang and
                  Hongfei Fu and
                  Amir Kafshdar Goharshady and
                  Krishnendu Chatterjee and
                  Xudong Qin and
                  Wenjun Shi},
  title        = {Cost analysis of nondeterministic probabilistic programs},
  booktitle    = {{PLDI}},
  pages        = {204--220},
  publisher    = {{ACM}},
  year         = {2019}
}

@inproceedings{DBLP:journals/jmlr/MaillardMLG10,
  author       = {Odalric{-}Ambrym Maillard and
                  R{\'{e}}mi Munos and
                  Alessandro Lazaric and
                  Mohammad Ghavamzadeh},
  title        = {Finite-sample Analysis of Bellman Residual Minimization},
  booktitle    = {{ACML}},
  series       = {{JMLR} Proceedings},
  volume       = {13},
  pages        = {299--314},
  publisher    = {JMLR.org},
  year         = {2010}
}

@inproceedings{DBLP:conf/ijcai/KearnsMN99,
  author       = {Michael J. Kearns and
                  Yishay Mansour and
                  Andrew Y. Ng},
  title        = {A Sparse Sampling Algorithm for Near-Optimal Planning in Large Markov
                  Decision Processes},
  booktitle    = {{IJCAI}},
  pages        = {1324--1231},
  publisher    = {Morgan Kaufmann},
  year         = {1999}
}

@inproceedings{DBLP:conf/icml/JinS21,
  author       = {Yujia Jin and
                  Aaron Sidford},
  title        = {Towards Tight Bounds on the Sample Complexity of Average-reward MDPs},
  booktitle    = {{ICML}},
  series       = {Proceedings of Machine Learning Research},
  volume       = {139},
  pages        = {5055--5064},
  publisher    = {{PMLR}},
  year         = {2021}
}

@book{asmussen2007stochastic,
  title={Stochastic Simulation: Algorithms and Analysis},
  author={Asmussen, S. and Glynn, P.W.},
  isbn={9780387690339},
  lccn={2007926471},
  series={Stochastic Modelling and Applied Probability},
  url={https://books.google.com.sg/books?id=vMI2MdxchU0C},
  year={2007},
  publisher={Springer New York}
}

@inproceedings{wangCentralMomentAnalysis2021,
  author       = {Di Wang and
                  Jan Hoffmann and
                  Thomas W. Reps},
  title        = {Central moment analysis for cost accumulators in probabilistic programs},
  booktitle    = {{PLDI}},
  pages        = {559--573},
  publisher    = {{ACM}},
  year         = {2021}
}

@inproceedings{DBLP:conf/cav/ChakarovS13,
  author       = {Aleksandar Chakarov and
                  Sriram Sankaranarayanan},
  title        = {Probabilistic Program Analysis with Martingales},
  booktitle    = {{CAV}},
  series       = {Lecture Notes in Computer Science},
  volume       = {8044},
  pages        = {511--526},
  publisher    = {Springer},
  year         = {2013}
}

@inproceedings{DBLP:conf/cav/ChatterjeeFG16,
  author       = {Krishnendu Chatterjee and
                  Hongfei Fu and
                  Amir Kafshdar Goharshady},
  title        = {Termination Analysis of Probabilistic Programs Through Positivstellensatz's},
  booktitle    = {{CAV} {(1)}},
  series       = {Lecture Notes in Computer Science},
  volume       = {9779},
  pages        = {3--22},
  publisher    = {Springer},
  year         = {2016}
}

@inproceedings{DBLP:conf/cav/AbateGR20,
  author       = {Alessandro Abate and
                  Mirco Giacobbe and
                  Diptarko Roy},
  title        = {Learning Probabilistic Termination Proofs},
  booktitle    = {{CAV} {(2)}},
  series       = {Lecture Notes in Computer Science},
  volume       = {12760},
  pages        = {3--26},
  publisher    = {Springer},
  year         = {2021}
}

@inproceedings{DBLP:conf/concur/AbateEGPR23,
  author       = {Alessandro Abate and
                  Alec Edwards and
                  Mirco Giacobbe and
                  Hashan Punchihewa and
                  Diptarko Roy},
  title        = {Quantitative Verification with Neural Networks},
  booktitle    = {{CONCUR}},
  series       = {LIPIcs},
  volume       = {279},
  pages        = {22:1--22:18},
  publisher    = {Schloss Dagstuhl - Leibniz-Zentrum f{\"{u}}r Informatik},
  year         = {2023}
}

@incollection{semantics,
author={Dahlqvist, Fredrik and Silva, Alexandra},
title={Semantics of Probabilistic Programming: A Gentle Introduction},
booktitle={Foundations of Probabilistic Programming},
editor={Barthe, Gilles and Katoen, Joost-Pieter and Silva, Alexandra},
publisher={Cambridge University Press},
year={2020},
pages={1-42}}

@inproceedings{GordonHNR14,
  author    = {Andrew D. Gordon and
               Thomas A. Henzinger and
               Aditya V. Nori and
               Sriram K. Rajamani},
  title     = {Probabilistic programming},
  booktitle = {{FOSE}},
  pages     = {167--181},
  publisher = {{ACM}},
  year      = {2014}
}

@article{Kozen81,
  author    = {Dexter Kozen},
  title     = {Semantics of Probabilistic Programs},
  journal   = {J. Comput. Syst. Sci.},
  volume    = {22},
  number    = {3},
  pages     = {328--350},
  year      = {1981}
}

@book{McIverM05,
  author    = {Annabelle McIver and
               Carroll Morgan},
  title     = {Abstraction, Refinement and Proof for Probabilistic Systems},
  series    = {Monographs in Computer Science},
  publisher = {Springer},
  year      = {2005}
}

@article{LSAZ22,
  author = {A. Lavaei and S. Soudjani and A. Abate and M. Zamani},
  title = {Automated Verification and Synthesis of Stochastic Hybrid Systems: A Survey},
  journal = {Automatica},
  year = {2022},
  volume = {146},
  number = {12},
  pages = {}
}

@article{SA13,
  author = {S. Esmaeil Zadeh Soudjani and A. Abate},
  title = {Adaptive and Sequential Gridding for Abstraction and Verification of Stochastic Processes},
  journal = {SIAM Journal on Applied Dynamical Systems},
  year = {2012},
  volume = {12},
  number = {2},
  pages = {921--956}
}

@inproceedings{DBLP:conf/tacas/PeruffoAA21,
  author       = {Andrea Peruffo and
                  Daniele Ahmed and
                  Alessandro Abate},
  title        = {Automated and Formal Synthesis of Neural Barrier Certificates for
                  Dynamical Models},
  booktitle    = {{TACAS} {(1)}},
  series       = {Lecture Notes in Computer Science},
  volume       = {12651},
  pages        = {370--388},
  publisher    = {Springer},
  year         = {2021}
}

@inproceedings{mdranking,
  author       = {Christophe Alias and
                  Alain Darte and
                  Paul Feautrier and
                  Laure Gonnord},
  title        = {Multi-dimensional Rankings, Program Termination, and Complexity Bounds
                  of Flowchart Programs},
  booktitle    = {{SAS}},
  series       = {Lecture Notes in Computer Science},
  volume       = {6337},
  pages        = {117--133},
  publisher    = {Springer},
  year         = {2010}
}

@inproceedings{DBLP:conf/sigsoft/Kwiatkowska07,
  author       = {Marta Z. Kwiatkowska},
  title        = {Quantitative verification: models techniques and tools},
  booktitle    = {{ESEC/SIGSOFT} {FSE}},
  pages        = {449--458},
  publisher    = {{ACM}},
  year         = {2007}
}

@inproceedings{DBLP:conf/icalp/BreugelW01,
  author       = {Franck van Breugel and
                  James Worrell},
  title        = {Towards Quantitative Verification of Probabilistic Transition Systems},
  booktitle    = {{ICALP}},
  series       = {Lecture Notes in Computer Science},
  volume       = {2076},
  pages        = {421--432},
  publisher    = {Springer},
  year         = {2001}
}

@article{DBLP:journals/cacm/BaierHHK10,
  author       = {Christel Baier and
                  Boudewijn R. Haverkort and
                  Holger Hermanns and
                  Joost{-}Pieter Katoen},
  title        = {Performance evaluation and model checking join forces},
  journal      = {Commun. {ACM}},
  volume       = {53},
  number       = {9},
  pages        = {76--85},
  year         = {2010}
}

@inproceedings{DBLP:journals/entcs/KwiatkowskaNP06,
  author       = {Marta Z. Kwiatkowska and
                  Gethin Norman and
                  David Parker},
  title        = {Quantitative Analysis With the Probabilistic Model Checker {PRISM}},
  booktitle    = {{QAPL}},
  series       = {Electronic Notes in Theoretical Computer Science},
  volume       = {153},
  pages        = {5--31},
  publisher    = {Elsevier},
  year         = {2005}
}

@inproceedings{DBLP:conf/tacas/ForejtKNPQ11,
  author       = {Vojtech Forejt and
                  Marta Z. Kwiatkowska and
                  Gethin Norman and
                  David Parker and
                  Hongyang Qu},
  title        = {Quantitative Multi-objective Verification for Probabilistic Systems},
  booktitle    = {{TACAS}},
  series       = {Lecture Notes in Computer Science},
  volume       = {6605},
  pages        = {112--127},
  publisher    = {Springer},
  year         = {2011}
}

@inproceedings{DBLP:conf/hybrid/AbateKM11,
  author       = {Alessandro Abate and
                  Joost{-}Pieter Katoen and
                  Alexandru Mereacre},
  title        = {Quantitative automata model checking of autonomous stochastic hybrid
                  systems},
  booktitle    = {{HSCC}},
  pages        = {83--92},
  publisher    = {{ACM}},
  year         = {2011}
}

@article{DBLP:journals/iandc/TkachevMKA17,
  author       = {Ilya Tkachev and
                  Alexandru Mereacre and
                  Joost{-}Pieter Katoen and
                  Alessandro Abate},
  title        = {Quantitative model-checking of controlled discrete-time Markov processes},
  journal      = {Inf. Comput.},
  volume       = {253},
  pages        = {1--35},
  year         = {2017}
}

@inproceedings{DBLP:conf/lpar/McIverM02,
  author       = {Annabelle McIver and
                  Carroll Morgan},
  title        = {Games, Probability and the Quantitative {\(\mathrm{\mu}\)}-Calculus
                  qM{\(\mathrm{\mu}\)}},
  booktitle    = {{LPAR}},
  series       = {Lecture Notes in Computer Science},
  volume       = {2514},
  pages        = {292--310},
  publisher    = {Springer},
  year         = {2002}
}

@incollection{DBLP:reference/mc/BaierAFK18,
  author       = {Christel Baier and
                  Luca de Alfaro and
                  Vojtech Forejt and
                  Marta Kwiatkowska},
  title        = {Model Checking Probabilistic Systems},
  booktitle    = {Handbook of Model Checking},
  pages        = {963--999},
  publisher    = {Springer},
  year         = {2018}
}

@article{DBLP:journals/toplas/MorganMS96,
  author       = {Carroll Morgan and
                  Annabelle McIver and
                  Karen Seidel},
  title        = {Probabilistic Predicate Transformers},
  journal      = {{ACM} Trans. Program. Lang. Syst.},
  volume       = {18},
  number       = {3},
  pages        = {325--353},
  year         = {1996}
}

@article{DBLP:journals/csysl/MathiesenCL23,
  author       = {Frederik Baymler Mathiesen and
                  Simeon Craig Calvert and
                  Luca Laurenti},
  title        = {Safety Certification for Stochastic Systems via Neural Barrier Functions},
  journal      = {{IEEE} Control. Syst. Lett.},
  volume       = {7},
  pages        = {973--978},
  year         = {2023}
}

@article{DBLP:journals/toplas/ChatterjeeFNH18,
  author       = {Krishnendu Chatterjee and
                  Hongfei Fu and
                  Petr Novotn{\'{y}} and
                  Rouzbeh Hasheminezhad},
  title        = {Algorithmic Analysis of Qualitative and Quantitative Termination Problems
                  for Affine Probabilistic Programs},
  journal      = {{ACM} Trans. Program. Lang. Syst.},
  volume       = {40},
  number       = {2},
  pages        = {7:1--7:45},
  year         = {2018}
}

@article{AgrawalC018,
  author    = {Sheshansh Agrawal and
               Krishnendu Chatterjee and
               Petr Novotn{\'{y}}},
  title     = {Lexicographic ranking supermartingales: an efficient approach to termination
               of probabilistic programs},
  journal   = {Proc. {ACM} Program. Lang.},
  volume    = {2},
  number    = {{POPL}},
  pages     = {34:1--34:32},
  year      = {2018}
}

@inproceedings{DBLP:conf/tacas/ChatterjeeHLZ23,
  author       = {Krishnendu Chatterjee and
                  Thomas A. Henzinger and
                  Mathias Lechner and
                  \DJ{}or\dj{}e \v{Z}ikeli\'{c}},
  title        = {A Learner-Verifier Framework for Neural Network Controllers and Certificates
                  of Stochastic Systems},
  booktitle    = {{TACAS} {(1)}},
  series       = {Lecture Notes in Computer Science},
  volume       = {13993},
  pages        = {3--25},
  publisher    = {Springer},
  year         = {2023}
}

@inproceedings{DBLP:conf/tacas/BatzCJKKM23,
  author       = {Kevin Batz and
                  Mingshuai Chen and
                  Sebastian Junges and
                  Benjamin Lucien Kaminski and
                  Joost{-}Pieter Katoen and
                  Christoph Matheja},
  title        = {Probabilistic Program Verification via Inductive Synthesis of Inductive
                  Invariants},
  booktitle    = {{TACAS} {(2)}},
  series       = {Lecture Notes in Computer Science},
  volume       = {13994},
  pages        = {410--429},
  publisher    = {Springer},
  year         = {2023}
}

@inproceedings{neuralmc,
  author    = {Mirco Giacobbe and
               Daniel Kroening and
               Abhinandan Pal and
	       Michael Tautschnig},
  title     = {Neural Model Checking},
  year      = {2024},
  booktitle = {NeurIPS}
}

@inproceedings{DBLP:conf/cav/AbateGR24,
  author       = {Alessandro Abate and
                  Mirco Giacobbe and
                  Diptarko Roy},
  title        = {Stochastic Omega-Regular Verification and Control with Supermartingales},
  booktitle    = {{CAV} {(3)}},
  series       = {Lecture Notes in Computer Science},
  volume       = {14683},
  pages        = {395--419},
  publisher    = {Springer},
  year         = {2024}
}

@inproceedings{DBLP:conf/aaai/ZikelicLHC23,
  author       = {\DJ{}or\dj{}e \v{Z}ikeli\'{c} and
                  Mathias Lechner and
                  Thomas A. Henzinger and
                  Krishnendu Chatterjee},
  title        = {Learning Control Policies for Stochastic Systems with Reach-Avoid
                  Guarantees},
  booktitle    = {{AAAI}},
  pages        = {11926--11935},
  publisher    = {{AAAI} Press},
  year         = {2023}
}

@inproceedings{DBLP:conf/nips/ZikelicLVCH23,
  author       = {\DJ{}or\dj{}e \v{Z}ikeli\'{c} and
                  Mathias Lechner and
                  Abhinav Verma and
                  Krishnendu Chatterjee and
                  Thomas A. Henzinger},
  title        = {Compositional Policy Learning in Stochastic Control Systems with Formal
                  Guarantees},
  booktitle    = {NeurIPS},
  year         = {2023}
}

@inproceedings{DBLP:conf/aaai/NadaliM0024,
  author       = {Alireza Nadali and
                  Vishnu Murali and
                  Ashutosh Trivedi and
                  Majid Zamani},
  title        = {Neural Closure Certificates},
  booktitle    = {{AAAI}},
  pages        = {21446--21453},
  publisher    = {{AAAI} Press},
  year         = {2024}
}

@inproceedings{DBLP:conf/cav/AbateGS24,
  author       = {Alessandro Abate and
                  Mirco Giacobbe and
                  Yannik Schnitzer},
  title        = {Bisimulation Learning},
  booktitle    = {{CAV} {(3)}},
  series       = {Lecture Notes in Computer Science},
  volume       = {14683},
  pages        = {161--183},
  publisher    = {Springer},
  year         = {2024}
}

@inproceedings{DBLP:conf/nips/AbateEG22,
  author       = {Alessandro Abate and
                  Alec Edwards and
                  Mirco Giacobbe},
  title        = {Neural Abstractions},
  booktitle    = {NeurIPS},
  year         = {2022}
}

@inproceedings{DBLP:conf/cav/AbateGR25,
  author       = {Alessandro Abate and
                  Mirco Giacobbe and
                  Diptarko Roy},
  title        = {Quantitative Supermartingale Certificates},
  booktitle    = {{CAV} {(2)}},
  series       = {Lecture Notes in Computer Science},
  volume       = {15932},
  pages        = {3--28},
  publisher    = {Springer},
  year         = {2025}
}

@inproceedings{DBLP:conf/aaai/NeustroevGL25,
  author       = {Grigory Neustroev and
                  Mirco Giacobbe and
                  Anna Lukina},
  title        = {Neural Continuous-Time Supermartingale Certificates},
  booktitle    = {{AAAI}},
  pages        = {27538--27546},
  publisher    = {{AAAI} Press},
  year         = {2025}
}

@inproceedings{DBLP:conf/nips/GiacobbeKPT25,
  author       = {Mirco Giacobbe and
                  Daniel Kroening and
                  Abhinandan Pal and
                  Michael Tautschnig},
  title        = {Let a Neural Network be Your Invariant},
  booktitle    = {NeurIPS},
  year         = {2025}
}

@inproceedings{DBLP:conf/cav/AbateGMS25,
  author       = {Alessandro Abate and
                  Mirco Giacobbe and
                  Christian Micheletti and
                  Yannik Schnitzer},
  title        = {Branching Bisimulation Learning},
  booktitle    = {{CAV} {(4)}},
  series       = {Lecture Notes in Computer Science},
  volume       = {15934},
  pages        = {161--184},
  publisher    = {Springer},
  year         = {2025}
}

@article{DBLP:journals/corr/abs-2505-15497,
  author       = {Frederik Baymler Mathiesen and
                  Nikolaus Vertovec and
                  Francesco Fabiano and
                  Luca Laurenti and
                  Alessandro Abate},
  title        = {Certified Neural Approximations of Nonlinear Dynamics},
  journal      = {CoRR},
  volume       = {abs/2505.15497},
  year         = {2025}
}

@inproceedings{DBLP:conf/atva/ChatterjeeGGKSSZ25,
  author       = {Krishnendu Chatterjee and
                  Amir Kafshdar Goharshady and
                  Ehsan Kafshdar Goharshady and
                  Mehrdad Karrabi and
                  Milad Saadat and
                  Maximilian Seeliger and
                  Dorde Zikelic},
  title        = {PolyQEnt: {A} Polynomial Quantified Entailment Solver},
  booktitle    = {{ATVA}},
  series       = {Lecture Notes in Computer Science},
  volume       = {16145},
  pages        = {411--424},
  publisher    = {Springer},
  year         = {2025}
}

\end{document}